\documentclass[letterpaper,11pt]{article}
\usepackage[top=1in, bottom=1in, left=1in, right=1in]{geometry}
\bibliographystyle{plainurl}
\usepackage[T1]{fontenc}
\usepackage[dvipsnames]{xcolor}
\usepackage{hyperref}
\usepackage{microtype}
\usepackage{mathtools,xspace,amsthm,amsfonts,amssymb}
\usepackage[inline]{enumitem}
\usepackage[ruled,vlined,linesnumbered]{algorithm2e}
\SetKwComment{Comment}{$\triangleright$\ }{}
\date{}

\theoremstyle{plain}
\newtheorem{theorem}{Theorem}

\newtheorem{lemma}[theorem]{Lemma}

\newtheorem{observation}[theorem]{Observation}
\theoremstyle{definition}
\newtheorem{definition}[theorem]{Definition}

\newif\ifComments
\Commentstrue

\ifComments
    \newcommand{\mdb}[1]{\textcolor{blue}{Mark: #1}}
    \newcommand{\mm}[1]{\textcolor{magenta}{Mehran: #1}}
    \newcommand{\skb}[1]{\textcolor{red}{S\'andor: #1}}
\else
    \newcommand{\mdb}[1]{}
    \newcommand{\mm}[1]{}
    \newcommand{\skb}[1]{}
\fi

\newcommand{\myclaimWithProof}[2]{\begin{quotation} \noindent {\emph{Claim.} }{#1} \\[2mm] {\emph{Proof of claim.}} #2 \hfill {\footnotesize $\triangleleft$} \end{quotation}}

\newcommand{\mypara}[1]{\medskip\noindent\textbf{#1}}

\newcommand{\ngain}{{\alpha}}
\newcommand{\ngainb}{{\beta}}
\newcommand{\gain}{{a}}
\newcommand{\gainb}{{b}}
\newcommand{\fa}{{A}}
\newcommand{\fb}{{B}}
\newcommand{\fg}{\mathrm{gain}}

\newcommand{\fmu}{M}

\newcommand{\jmu}{j}
\newcommand{\mmu}{m}
\newcommand{\fgain}{{\mbox{gain}_\alpha}}
\newcommand{\fgainb}{{\mbox{gain}_\beta}}

\newcommand{\fy}{X_{\max}}
\newcommand{\vy}{y}
\newcommand{\gainplus}{\oplus}
\newcommand{\ftau}{{\Sigma_{\mathrm{gain}}}}

\newcommand{\NA}{\mbox{\sc Nil}}
\newcommand{\strict}{\mathrm{strict}}
\newcommand{\loose}{\mathrm{loose}}

\newcommand{\Reals}{{\mathbb{R}}}

\newcommand{\PlQ}{${\mathcal Q}$\xspace}
\newcommand{\PlP}{${\mathcal P}$\xspace}
\newcommand{\etal}{\emph{et al.}\xspace}
\newcommand{\eps}{\varepsilon}
\newcommand{\fs}{\widehat{s}}
\newcommand{\ft}{\widehat{t}}
\newcommand{\ComputeSolutions}{\mbox{{\sc ComputeSolutions}}\xspace}
\DeclareMathOperator{\vdeg}{deg}

\DeclareMathOperator{\distop}{dist}

\DeclareMathOperator{\fspan}{span}
\DeclareMathOperator*{\argmax}{argmax}
\DeclareMathOperator*{\argmin}{argmin}

\renewcommand{\leq}{\leqslant}
\renewcommand{\geq}{\geqslant}

\newcommand{\BeginMyItemize}{\begin{itemize}\setlength{\itemsep}{-\parskip}}
\newcommand{\EndMyItemize}{\end{itemize}}
\newcommand{\myitemize}[1]{\BeginMyItemize #1 \EndMyItemize}

\newcommand{\BeginMyEnumerate}{\begin{enumerate}\setlength{\itemsep}{-\parskip}}
\newcommand{\EndMyEnumerate}{\end{enumerate}}
\newcommand{\myenumerate}[1]{\BeginMyEnumerate #1 \EndMyEnumerate}

\newcommand{\win}{\mathit{win}}
\newcommand{\maj}{\mathit{Majority}}
\newcommand{\PlanarQuantCNF}{{\sc Planar }$\forall\exists$3-{\sc cnf}\xspace}


\title{On One-Round Discrete Voronoi Games\footnote{This work was partially supported by the Netherlands Organization for Scientific Research NWO under project no. 024.002.003.}
}
\author{\addtocounter{footnote}{2}
Mark de Berg\thanks{Department of Mathematics and Computer Science, TU Eindhoven, the Netherlands.
     Email: {\tt mdberg@win.tue.nl, s.kisfaludi.bak@tue.nl, mehran.mehr@gmail.com}} \and
\and
S\'andor Kisfaludi-Bak$^{\mathsection}$
\and
Mehran Mehr$^{\mathsection}$
}

\begin{document}
\maketitle

\begin{abstract}
Let $V$ be a multiset of $n$ points in $\Reals^d$, which we call voters,
and let $k\geq 1$ and $\ell\geq 1$ be two given constants.
We consider the following game, where two players \PlP and \PlQ compete
over the voters in $V$: First, player \PlP selects $k$~points in $\Reals^d$,
and then player \PlQ selects $\ell$~points in $\Reals^d$.
Player \PlP wins a voter
$v\in V$ iff $\distop(v,P) \leq \distop(v,Q)$, where
$\distop(v,P) :=  \min_{p\in P} \distop(v,p)$ and $\distop(v,Q)$ is defined similarly.
Player \PlP wins the game if he wins at least half the voters.
The algorithmic problem we study is the following: given $V$, $k$, and $\ell$,
how efficiently can we decide if player \PlP has a winning strategy, that is,
if \PlP can select his $k$ points such that he wins the game no matter where \PlQ places her points.

Banik~\etal devised a singly-exponential algorithm for the game in $\Reals^1$,
for the case~$k=\ell$. We improve their result by presenting
the first polynomial-time algorithm for the game in $\Reals^1$. Our algorithm
can handle arbitrary values of $k$ and $\ell$. We also show that if $d\ge 2$,
deciding if player \PlP has a winning strategy is $\Sigma_2^P$-hard
when $k$ and $\ell$ are part of the input. Finally, we prove
that for any dimension~$d$, the problem is contained in the complexity
class~$\exists\forall \Reals$, and we give an algorithm that works in polynomial time
for fixed $k$ and $\ell$.
\end{abstract}

\section{Introduction}
Voronoi games, as introduced by Ahn~\etal~\cite{ahn2004competitive},
can be viewed as competitive facility-location problems in which two
players \PlP and \PlQ want to place their facilities in order to maximize their market area.
The Voronoi game of Ahn~\etal~is played in a bounded region $R\subset \Reals^2$,
and the facilities of the players are modeled as points in this region.
Each player gets the same number, $k$, of facilities, which they have to
place alternatingly. The market area of \PlP (and similarly of \PlQ)
is now given by the area of the region of all points $q\in R$ whose closest
facility was placed by \PlP, that is, it is the total area of the Voronoi cells
of \PlP's facilities in the Voronoi diagram of the facilities of \PlP and~\PlQ.
Ahn~\etal~proved that for $k>1$ and when the region~$R$ is a circle or a segment, the second player
can win the game by a payoff of $1/2+\eps$, for some $\eps>0$,
where the first player can ensure $\eps$ is arbitrarily small.

The one-round Voronoi game introduced by Cheong~\etal~\cite{cheong2004one} is similar
to the Voronoi game of Ahn~\etal, except that the first player must first place
all his $k$ facilities, after which the second player places all her~$k$ facilities.
They considered the problem where~$R$ is a square, and they showed
that when $k$ is large enough the first player can always win a fraction $1/2+\alpha$
of the area of~$R$ for some $\alpha>0$. Fekete and Meijer~\cite{fekete2005one} considered the problem on
a rectangle~$R$ of aspect ratio~$\rho \leq 1$. They showed that the first player
wins more than half the area of~$R$, unless $k \geq 3$ and $\rho > \sqrt{2}/n$,
or $k=2$ and $\rho > \sqrt{3}/2$. They also showed that if $R$ is a polygon with holes,
then computing the locations of the facilities for the second player that maximize the area she wins,
against a given set of facilities of the first player is NP-hard.

\mypara{One-round discrete Voronoi games.}
In this paper we are interested in \emph{discrete (Euclidean) one-round Voronoi games},
where the players do not compete for area but for a discrete set of points.
That is, instead of the region $R$ one is given a set $V$ of $n$
points in a geometric space, and a point $v\in V$ is won by the player
owning the facility closest to $v$.
(Another discrete variant of Voronoi games is played on graphs~\cite{spoerhase2009r,teramoto2011voronoi}
but we restrict our attention to the geometric variant.)
More formally, the problem we study is defined as follows.

Let $V$ be a multiset of $n$ points in $\Reals^d$, which we call
\emph{voters} from now on,
and let $k\geq 1$ and $\ell\geq 1$ be two integers.
The one-round discrete Voronoi game defined by the triple $\langle V,k,\ell\rangle$
is a single-turn game played between two players~\PlP and~\PlQ.
First, player \PlP places a set $P$
of $k$ points in $\Reals^d$, then player \PlQ places a set $Q$
of $\ell$ points in $\Reals^d$. (These points may coincide with the voters in~$V$.)
We call the set $P$ the \emph{strategy of \PlP} and the set $Q$ the \emph{strategy of \PlQ}.
Player~\PlP wins a voter $v\in V$ if $\distop(v,P) \leq \distop(v,Q)$, where
$\distop(v,P)$ and $\distop(v,Q)$ denote the minimum distance between a voter~$v$
and the sets $P$  and $Q$, respectively. Note that this definition
favors player~\PlP, since in case of a tie a voter is won by~\PlP.
We now define
$
V[P\succeq Q] := \{ v\in V : \distop(v,P) \leq \distop(v,Q) \}
$
to be the multiset of voters won by player \PlP when he uses strategy~$P$
and player \PlQ uses strategy~$Q$.
Player \PlP wins the game $\langle V,k,\ell\rangle$ if he wins at least half the voters in $V$, that is,
when $\big| V[P\succeq Q] \big| \geq n/2$; otherwise \PlQ wins the game.
Here $\big| V[P\succeq Q] \big|$ denotes the size of the multiset $V[P\succeq Q]$
(counting multiplicities).
We now define $\Gamma_{k,\ell}(V)$ as the maximum number of voters
that can be won by player \PlP against an optimal opponent:
\[
\Gamma_{k,\ell}(V) := \max_{P \subset \Reals^d,\; |P|=k} \ \ \min_{Q \subset \Reals^d,\; |Q|=\ell} \ \ \big| V[P\succeq Q] \big|.
\]
For a given multiset $V$ of voters, we want to decide if\footnote{One can also
require that $\Gamma_{k,\ell} (V) > n/2$; with some small modifications, all the
results in this paper can be applied to the case with strict inequality as well.}
$\Gamma_{k,\ell} (V) \geq n/2$.
In other words, we are interested in determining for a given game~$\langle V,k,\ell\rangle$
if \PlP has a \emph{winning strategy}, which is a set of $k$ points such that \PlP wins
the game no matter where \PlQ places her points.

An important special case, which has already been studied in spatial voting theory
for a long time, is when $k=\ell=1$~\cite{mckelvey1976voting}.
Here the coordinates of a point in $V$ represent the preference of
the voter on certain topics, and the point played by~\PlQ represents a certain
proposal. If the point played by \PlP wins against all possible points played
by \PlQ, then the \PlP's proposal will win the vote against any other
proposal. Note that in the problem definition we gave above, voters at equal
distance from~$P$ and~$Q$ are won by~\PlP, and \PlP has to win at least half the voters.
This is the definition typically used in papers of Voronoi
games~\cite{banik2013optimal,banik2013two,banik2017discrete,banik2016discrete}.
In voting theory other variants are studied as well, for instance
where points at equal distance to $P$ and $Q$ are not won by either of them,
and \PlP wins the game if he wins more voters than \PlQ;
see the paper by McKelvey and Wendell \cite{mckelvey1976voting}
who use the term \emph{majority points} for the former variant and
the term \emph{plurality points} for the latter variant.

\mypara{Previous work.}
Besides algorithmic problems concerning the one-round discrete Voronoi game
one can also consider combinatorial problems.
In particular, one can ask for bounds on $\Gamma_{k,\ell}(V)$ as a function of
$n$, $k$, and~$\ell$.
It is known that for any set $V$ in $\Reals^2$ and $k=\ell=1$ we have
$\left\lfloor n/3 \right\rfloor \leq \Gamma_{1,1}(V) \leq \left\lceil n/2 \right\rceil$.
This result is based on known bounds for maximum Tukey depth, where the lower bound
can be proven using Helly's theorem. It is also known~\cite{banik2016discrete} that
there is a constant $c$ such that $k=c\ell$ points suffice for \PlP to win the game,
that is, $\Gamma_{c\ell,\ell}(V) \geq n/2$ for any $V$.

In this paper we focus on the algorithmic problem of computing $\Gamma_{k,\ell}(V)$
for given $V$, $k$, and~$\ell$. The problem of deciding if $\Gamma_{k,\ell}(V) \geq n/2$
was studied for the case $k=\ell=1$ by Wu \etal \cite{wu2013computing} and Lin \etal \cite{lin2015forming}, and later by
De~Berg~\etal\cite{berg2018faster} who solve this problem in $O(n \log n)$ time
in any fixed dimension~$d$. Their algorithms works when $V$ is a set (not a multiset)
and for plurality points instead of majority points.
Other algorithmic results are for the setting where
the players already placed all but one of their points,
and one wants to compute the best locations for the last point of~\PlP and of~\PlQ.
Banik~\etal~\cite{banik2017discrete} gave algorithms that finds the best
location for \PlP in $O(n^8)$ time and for \PlQ in $O(n^2)$ time.
For the two-round variant of the problem, with $k=\ell=2$, polynomial algorithms for finding
the optimal strategies of both players are also known~\cite{banik2013two}.

Our work is inspired by the paper of Banik~\etal~\cite{banik2013optimal}
on computing $\Gamma_{k,\ell}(V)$ in~$\Reals^1$. They considered the case of arbitrarily
large $k$ and $\ell$, but where $k=\ell$ (and $V$ is a set instead of a multiset).
For this case they showed that depending on the set $V$ either \PlP or \PlQ
can win the game, and they presented an algorithm to compute $\Gamma_{k,\ell}(V)$
in time $O(n^{k-\lambda_k})$, where $0< \lambda_k <1$ is a constant dependent only on~$k$.
This raises the question: is the problem NP-hard when $k$ is part of the input?

\mypara{Our results.}
We answer the question above negatively, by presenting an algorithm that
computes $\Gamma_{k,\ell}(V)$ in~$\Reals^1$ in polynomial time. Our algorithm
works when $V$ is a multiset, and it does not require $k$ and $\ell$ to be
equal. Our algorithm computes $\Gamma_{k,\ell}(V)$ and finds a strategy for
\PlP that wins this many voters in time~$O(kn^4)$. The algorithm can be
extended to the case when the voters are weighted, requiring only a slight
increase in running time.

The algorithm by Banik~\etal~\cite{banik2013optimal} discretizes the problem,
by defining a finite set of potential locations for \PlP to place his points.
However, to ensure an optimal strategy for \PlP, the set of potential locations
has exponential size. To overcome this problem we need several new ideas.
First of all, we essentially partition the possible strategies into various
classes---the concept of thresholds introduced later plays this role---such
that for each class we can anticipate the behavior of the optimal strategy
for~\PlQ. To compute the best strategy within a certain class we use dynamic
programming, in a non-standard (and, unfortunately, rather complicated) way.
The subproblems in our dynamic-program are for smaller point sets and smaller
values of $k$ and $\ell$ (actually we will need several other parameters) where
the goal of \PlP will be to push his rightmost point as far to the right as possible to win a certain
number of points. One complication in the dynamic program is that it is
unclear which small subproblems~$I'$ can be used to solve a given subproblem~$I$.
The opposite direction---determining for $I'$ which larger subproblems~$I$
may use $I'$ in their solution---is easier however, so we use a sweep approach:
when the solution to some~$I'$ has been determined, we update the solution
to larger subproblems~$I$ that can use~$I'$.

After establishing that we can compute $\Gamma_{k,\ell}(V)$ in polynomial
time in~$\Reals^1$, we turn to the higher-dimensional problem. We show
that deciding if \PlP has a winning strategy is $\Sigma_2^P$-hard in $\Reals^2$.
We also show that for fixed $k$ and $\ell$ this problem can be solved
in polynomial time. Our solution combines algebraic methods~\cite{basu1996combinatorial}
with a result of Paterson and Zwick~\cite{paterson1993shallow} that one can construct
a polynomial-size boolean circuits that implements the majority function.
The latter result in essential to avoid the appearance of~$n$ in the exponent.
As a byproduct of the algebraic method, we show that the problem is contained
in the complexity class~$\exists\forall \Reals$; see \cite{dobbins2018area} for more information on this complexity class.

\section{A polynomial-time algorithm for \texorpdfstring{$d=1$}{d=1}}\label{sec:1d}
In this section we present a polynomial-time algorithm for the 1-dimensional discrete Voronoi game.
Our algorithm will employ dynamic programming, and it will be convenient to use $n$, $k$,
and $\ell$ as variables in the dynamic program. From now on, we therefore use
$n^*$ for the size of the original multiset $V$, and $k^*$ and $\ell^*$
for the initial number of points that can be played by \PlP and \PlQ, respectively.


\subsection{Notation and basic properties}
We denote the given multiset of voters by $V := \{v_1,\ldots,v_{n^*}\}$, where we assume
the voters are numbered from left to right. We also always number the points in the
strategies $P:= \{p_1,\ldots,p_{k^*}\}$ and $Q:= \{q_1,\ldots,q_{\ell^*}\}$ from left to right.
For brevity we make no distinction between a point and its value
(that is, its $x$-coordinate), so that we can for example write~$p_1 < q_1$ to indicate
 that the leftmost point
of~$P$ is located to the left of the leftmost point of~$Q$.

For a given game~$\langle V,k,\ell\rangle$, we
say that a strategy $P$ of player~\PlP \emph{realizes} a gain $\gamma$ if
$\big| V[P\succeq Q] \big| \geq \gamma$ for any strategy $Q$ of player~\PlQ.
Furthermore, we say that a strategy $P$ is \emph{optimal} if it realizes
$\Gamma_{k,\ell}(V)$, the maximum possible gain for \PlP, and
we say a strategy $Q$ is \emph{optimal} against
a given strategy~$P$ if $\big| V[P\succeq Q] \big| \leq \big| V[P\succeq Q'] \big|$
for any strategy~$Q'$.

\mypara{Trivial, reasonable, and canonical strategies for \PlP.}
For $0\leq n \leq n^*$, define $V_n\coloneqq \{v_1,\ldots,v_n\}$ to be the leftmost~$n$
points in $V$. Suppose we want to compute $\Gamma_{k,\ell}(V_n)$ for some
$1\leq k \leq n$ and  $0\leq \ell \leq n$.
The \emph{trivial strategy} of player~\PlP is to
place his points at the $k$ points of $V_n$ with the highest
multiplicities---here we consider the multiset $V_n$ as a set of distinct points, each
with a multiplicity corresponding to the number of times it occurs in~$V_n$---with
ties broken arbitrarily. Let $\| V_n\|$ denote the number of distinct points in $V_n$.
Then the trivial strategy is optimal
when $k\geq \| V_n \|$ and also when $\ell\geq 2k$: in the former case
\PlP wins all voters with the trivial strategy, and in the latter case \PlQ can
always win all voters not coinciding with a point in~$P$ (namely by surrounding each point
$p_i\in P$ by two points sufficiently close to $p_i$) so the trivial strategy is optimal for~\PlP.
Hence, from now on we consider subproblems with $k < \| V_n \|$ and $\ell < 2k$.


We can without loss of generality restrict our attention to strategies for \PlP
that place at most one point in each half-open interval of the form $(v_i,v_{i+1}]$
with $v_i\neq v_{i+1}$, where $0\leq i\leq n$, $v_0 \coloneqq -\infty$, and $v_{n^*+1} \coloneqq \infty$.
Indeed, placing more than two points inside an interval $(v_i,v_{i+1}]$
is clearly not useful, and if two points are placed in some interval~$(v_i,v_{i+1}]$
then we can always move the leftmost point onto~$v_i$. (If $v_i$ is already occupied by
a point in $P$, then we can just put the point on any unoccupied voter;
under our assumption that $k < \| V_n \|$ an unoccupied voter always exists.)
We will call a strategy for \PlP satisfying the property above \emph{reasonable}.

\begin{observation}[Banik~\etal~\cite{banik2013optimal}] \label{obs:banik}
Assuming $k < \|V_n\|$ there exist an optimal strategy for~\PlP
that is reasonable and has $p_1\in V$ (that is, $p_1$ coincides with a voter).
\end{observation}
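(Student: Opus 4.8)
The plan is to start from an arbitrary optimal strategy $P$ for \PlP on the game $\langle V_n,k,\ell\rangle$ and transform it in two stages: first make it reasonable, then push its leftmost point onto a voter, all the while not decreasing the gain realized against an optimal \PlQ. Since $\Gamma_{k,\ell}(V_n)$ is the maximum over all strategies, it suffices to show each transformation preserves the property ``realizes $\Gamma_{k,\ell}(V_n)$''.

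\medskip\noindent\textbf{Step 1: Making $P$ reasonable.} Suppose $P$ places two or more points in some half-open interval $(v_i,v_{i+1}]$ with $v_i\neq v_{i+1}$. Placing three or more points in such an interval is clearly wasteful: since the interval contains no voter in its interior, only the leftmost and rightmost of these points can be closest to any voter, so we may delete the others and re-place them on unoccupied voters (which exist because $k<\|V_n\|$), only helping \PlP. So assume exactly two points $p_a<p_b$ lie in $(v_i,v_{i+1}]$. I would argue we can move $p_a$ leftward onto $v_i$ (or, if $v_i$ is already occupied, onto any unoccupied voter). The key observation is that for every voter $v_j$ with $j\le i$, the point $v_i$ is at least as close to $v_j$ as $p_a$ was, and for every voter $v_j$ with $j\ge i+1$, that voter is at distance $\ge \distop(v_j,p_b)$ from the old $p_a$ anyway since $p_b$ lies between $p_a$ and $v_j$ (or coincides with $v_{i+1}$). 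Hence $\distop(v_j,P')\le \distop(v_j,P)$ for all $j$, where $P'$ is the modified strategy, so $V[P\succeq Q]\subseteq V[P'\succeq Q]$ for every \PlQ strategy $Q$, and $P'$ realizes at least as large a gain. Repeating this finitely often (each step either reduces the number of doubly-occupied intervals or moves a point strictly left toward a voter) yields a reasonable optimal strategy; one should check the process terminates, e.g.\ by a potential argument on the multiset of point positions.

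\medskip\noindent\textbf{Step 2: Forcing $p_1\in V$.} Now take a reasonable optimal $P$ and consider its leftmost point $p_1$. If $p_1$ is not a voter, it lies in some interval $(v_i,v_{i+1}]$ with $v_i < p_1 < v_{i+1}$ (the case $p_1\le v_1$, i.e.\ $i=0$, is included). I would slide $p_1$ to the right until it first hits a voter, i.e.\ set $p_1':=v_{i+1}$, keeping all other points of $P$ fixed; note reasonableness is preserved since $p_1'$ still lies in $(v_i,v_{i+1}]$ and no other point of $P$ is in that interval (if $p_2$ were, it would also lie in $(v_i,v_{i+1}]$ contradicting... actually $p_2>v_{i+1}$ is possible, but then after the move $p_1'=v_{i+1}<p_2$, still fine; if $p_2\in(v_i,v_{i+1}]$ then $p_2\ge p_1'$ and we may have created a double occupancy, handled by re-invoking Step 1). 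The gain-preservation claim here is the subtle one: moving $p_1$ right can only hurt voters to the left of $p_1$, of which there are none in $V_n$ among $v_1,\dots,v_i$? No---there are voters $v_1,\dots,v_i$. The right argument is that \PlQ's optimal response already ``gives up'' on exploiting the gap to the left of $p_1$ in a way that makes this move neutral: any voter $v_j$ with $j\le i$ that $P$ was winning, it was winning because $\distop(v_j,P)=\distop(v_j,p_1)$ beats \PlQ; but \PlQ, being optimal, would have placed a point just left of $p_1$ to steal exactly those voters she can, and sliding $p_1$ right to $v_{i+1}$ changes $\distop(v_j,p_1)$ only for $j\le i$, for each of which the comparison outcome can be shown not to change in \PlP's disfavor once \PlQ re-optimizes. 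Making this precise is the main obstacle, and it is presumably where Banik~\etal's original argument does the work; I expect the cleanest route is a direct exchange argument showing that for the new $P'$ there is a \PlQ-response against which $P'$ wins $\ge \Gamma_{k,\ell}(V_n)$ voters, by taking \PlQ's optimal response $Q$ to the old $P$ and checking $|V[P'\succeq Q']|\ge |V[P\succeq Q]|$ for a suitably chosen $Q'$ (or even $Q'=Q$).

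\medskip\noindent\textbf{Main obstacle.} The delicate point is Step 2's gain-preservation under moving $p_1$ rightward, since this move genuinely changes distances for the left block of voters; unlike Step 1 it is not a monotone improvement for every $Q$. I would resolve it by the standard Voronoi-game trick: characterize which voters among $v_1,\dots,v_i$ \PlP can possibly retain against an optimal \PlQ (essentially none of them, or only those coinciding with $p_1$, because \PlQ can place a point arbitrarily close to the left of $p_1$ and claim everything left of the midpoint), conclude that the contribution of the left block to $\Gamma_{k,\ell}(V_n)$ is unaffected by the exact position of $p_1$ within $(v_i,v_{i+1}]$, and hence the slide is safe. Once both steps are justified, composing them gives a reasonable optimal strategy with $p_1\in V$, as claimed.
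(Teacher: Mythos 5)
The paper does not actually prove this observation---it is cited from Banik~\etal, and only the ``reasonable'' half is sketched informally in the paragraph preceding it---so I assess your proposal on its own merits. Step~1 is correct and matches the paper's sketch: moving the leftmost of two points in a doubly-occupied interval $(v_i,v_{i+1}]$ onto $v_i$ (or onto an unoccupied voter, which exists because $k<\|V_n\|$) never increases $\distop(v_j,P)$ for any voter, so it is a pointwise improvement against every $Q$, and the number of points of $P$ coinciding with voters strictly increases with each move, giving termination.

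Step~2 is where the content lies, and it has both a logical flaw and a substantive gap. Logically, your proposed verification---``there is a \PlQ-response against which $P'$ wins $\ge\Gamma_{k,\ell}(V_n)$ voters''---has the quantifier backwards: \PlQ chooses her response \emph{after} seeing $P'$, so you must establish $\big|V[P'\succeq Q']\big|\ge\Gamma_{k,\ell}(V_n)$ for \emph{every} $Q'$, not exhibit one favourable $Q'$. Substantively, the heuristic ``\PlP retains essentially none of $v_1,\ldots,v_i$ against an optimal \PlQ'' is not correct: whether \PlQ spends one of her $\ell$ points to steal the left block is a global trade-off, and against a strong $P$ she frequently will not, in which case \PlP does retain those voters. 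What actually makes the slide $p_1\mapsto v_{i+1}$ safe is a statement about \PlQ's \emph{potential} gains rather than her realized ones: as $p_1$ varies within $(v_i,v_{i+1}]$, the $\ngain$-gain of $(-\infty,p_1)$ stays exactly $i$ and its $\ngainb$-gain stays $0$; sliding $p_1$ to $v_{i+1}$ shrinks the interval $(p_1,p_2)$, removes exactly one voter ($v_{i+1}$, now won outright by \PlP) from it, and cannot increase its $\ngain$-gain; all other intervals are untouched. Together with $\fgain\ge\fgainb$ (Observation~\ref{obs:lessgain}), this implies that for every $\ell$ the sum of the $\ell$ largest entries of the gain sequence of $P'$ is at most that of $P$, so \PlQ's best response to $P'$ wins no more than her best response to $P$, i.e.\ $P'$ realizes at least $\Gamma_{k,\ell}(V_n)$. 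Applying this to the lexicographically largest reasonable optimal strategy forces $p_1\in V$. This is precisely the gain-sequence machinery the paper develops right after the observation, and it is the ingredient your sketch identifies as missing but does not supply.
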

We can define an ordering on strategies of the same size by sorting them in
lexicographical order. More precisely, we say that a strategy~$P =\{p_1,\ldots,p_k\}$
is \emph{greater than} a strategy~$P' =\{p'_1,\ldots,p'_k\}$, denoted by $P\succ P'$, if
$\langle p_1,\ldots,p_k\rangle >_{\mathrm{lex}} \langle p'_1,\ldots,p'_k\rangle$,
where $>_{\mathrm{lex}}$ denotes the lexicographical order.
Using this ordering, the largest reasonable strategy $P$
that is optimal---namely, that realizes $\Gamma_{k,\ell}(V_n)$---is called
the \emph{canonical strategy} of \PlP.

\mypara{$\ngain$-gains, $\ngainb$-gains, and gain sequences.}
Consider a strategy $P := \{p_1,\ldots,p_k\}$. It will be convenient to add two extra points
to~$P$, namely $p_0\coloneqq -\infty$ and $p_{k+1}\coloneqq \infty$; this clearly does not
influence the outcome of the game.
The strategy $P$ thus induces $k+1$ open intervals of the form $(p_i,p_{i+1})$ where player \PlQ may place her points.
It is easy to see that there exists an optimal strategy for \PlQ with the following property:
$Q$ contains at most two points in each interval $(p_i,p_{i+1})$
with $1\leq i\leq k-1$, and at most one point in $(p_0,p_{1})$ and
at most one point in~$(p_k,p_{k+1})$. From now on we restrict our attention to
strategies for \PlQ with this property.

Now suppose that $x$ and $y$ are consecutive points (with $x<y$) in some strategy~$P$,
where $x$ could be $-\infty$ and $y$ could be $\infty$. As just argued,
\PlQ either places zero, one, or two points inside $(x,y)$. When \PlQ places
zero points, then she obviously does not win any of the voters in $V_n \cap (x,y)$.
The maximum number of voters \PlQ can win from $V_n \cap (x, y)$ by placing
a single point is the maximum number of voters in $(x,y)$ that can be covered
by an open interval of length $(y-x)/2$; see~Banik~\etal~\cite{banik2013optimal}.
We call this value the \emph{$\ngain$-gain} of \PlQ  in $(x,y)$ and denote
it by $\fgain(V_n,x,y)$. By placing two points inside $(x,y)$, one
immediately to the right of $x$ and one immediately to the left of $y$,
player~\PlQ will win all voters $V_n \cap (x,y)$. Thus the extra number of voters
won by the second point in $(x,y)$ as compared to just placing a single point
is equal to $|V_n\cap (x,y)|-\fgain(V_n,x,y)$. We call this quantity
the \emph{$\ngainb$-gain} of $Q$ in $(x,y)$ and denote it by $\fgainb(V_n,x,y)$.
Note that for intervals $(x,\infty)$ we have $\fgain(x,\infty) = |V_n\cap (x,\infty)|$
and $\fgainb(x,\infty) = 0$; a similar statement holds for $(-\infty,y)$.

The following observation follows from the fact that
$\fgain(V_n,x,y)$ equals the maximum number of voters in $(x,y)$ that can be covered
by an open interval of length $(y-x)/2$.
\begin{observation}[Banik~\etal~\cite{banik2013optimal}]
\label{obs:lessgain}
For any $x,y$ we have $\fgain(V_n,x,y) \geq \fgainb(V_n,x,y)$.
\end{observation}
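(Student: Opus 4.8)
The plan is to reduce the claimed inequality to the statement that an open interval of length $(y-x)/2$ can always capture at least half of the voters lying strictly between $x$ and $y$. Indeed, by definition $\fgainb(V_n,x,y) = |V_n \cap (x,y)| - \fgain(V_n,x,y)$, so $\fgain(V_n,x,y) \geq \fgainb(V_n,x,y)$ is equivalent to $2\,\fgain(V_n,x,y) \geq |V_n \cap (x,y)|$. When $x = -\infty$ or $y = \infty$, or when $V_n \cap (x,y) = \emptyset$, the inequality is immediate from the remarks preceding the observation, so from now on I assume $x$ and $y$ are finite and $V_n \cap (x,y) \neq \emptyset$.

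First I would set $m \coloneqq (x+y)/2$ and split $V_n \cap (x,y)$ into the voters lying in $(x,m]$ and those lying in $(m,y)$. By pigeonhole, one of these two parts---call it $W$---satisfies $|W| \geq |V_n \cap (x,y)|/2$. It then suffices to exhibit an open interval of length $(y-x)/2 = m-x$ containing all of $W$, since that gives $\fgain(V_n,x,y) \geq |W| \geq |V_n \cap (x,y)|/2$, which is exactly the inequality we want.

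For the case $W = V_n \cap (m,y)$, the open interval $(m,y)$ has length exactly $(y-x)/2$ and contains $W$, so we are done. For the case $W = V_n \cap (x,m]$, let $a$ be the position of the leftmost voter of $W$; since $a > x$ we may choose $\eps$ with $0 < \eps < a - x$, and then the open interval $\bigl(a-\eps,\; a-\eps+(m-x)\bigr)$ has length exactly $(y-x)/2$, has left endpoint strictly less than $a$, and has right endpoint $a-\eps+m-x > m$, so it contains the closed interval $[a,m]$ and in particular all of $W$. This completes the argument.

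I do not expect a serious obstacle here; the only points requiring care are the bookkeeping of open versus half-open intervals---which is why the midpoint $m$ is grouped with the left half, forcing the small perturbation in the last step---and remembering to dispose of the unbounded intervals and the empty case at the start. (One could alternatively avoid the perturbation by appealing to the fact that a half-open interval of length $L$ covers no more voters than some open interval of length $L$, but producing an exact-length open interval directly seems cleanest.)
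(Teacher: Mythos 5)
Your proof is correct and fills in, via the midpoint split and a pigeonhole argument, exactly the detail the paper leaves implicit when it says the observation ``follows from the fact that $\fgain(V_n,x,y)$ equals the maximum number of voters in $(x,y)$ that can be covered by an open interval of length $(y-x)/2$.'' The small $\eps$-perturbation you use in the second case is a clean way to handle the open-versus-half-open bookkeeping, and the edge cases (unbounded intervals, empty intersection) are disposed of correctly.
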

If we let $a := \fgain(V_n,x,y)$ and $b:=\fgainb(V_n,x,y)$, then player~\PlQ
wins either 0, $a$, or $a+b$ points depending on whether she plays 0, 1, or 2 points
inside the interval. It will therefore be convenient to introduce the notation
$\gainplus_j$ for $j\in\{0,1,2\}$, which is defined as
\[
\gain \gainplus_0 \gainb \coloneqq 0, \hspace*{5mm}
\gain \gainplus_1 \gainb \coloneqq \gain, \hspace*{5mm}
\gain \gainplus_2 \gainb \coloneqq \gain+\gainb. \\
\]
We assume the precedence of these operators are higher than addition.

Let $P:=\{p_0,p_1,\ldots,p_k,p_{k+1}\}$ be a given strategy for~\PlP,
where by convention $p_0=-\infty$ and $p_{k+1}=\infty$. Consider
$\{ \fgain(V_n,p_i,p_{i+1}) : 0\leq i\leq k \} \cup \{ \fgainb(V_n,p_i,p_{i+1}) : 0\leq i\leq k\}$,
the multiset of all $\ngain$-gains and $\ngainb$-gains defined by the intervals~$(p_i,p_i+1)$.
Sort this sequence in non-increasing order, using the following tie-breaking rules
if two gains are equal:
\myitemize{
\item if one of the gains is for an interval $(p_i,p_{i+1})$---that is, the gain is either 
$\fgain(V_n,p_i,p_{i+1})$ or $\fgainb(V_n,p_i,p_{i+1})$---and the other gain is
    for an interval $(p_j,p_{j+1})$ with $j>i$, then
    the gain for $(p_i,p_{i+1})$ precedes  the gain for $(p_j,p_{j+1})$.
\item if both gains are for the same interval $(p_i,p_{i+1})$ then the
      $\ngain$-gain precedes the $\ngainb$-gain.
}
We call the resulting sorted sequence the \emph{gain sequence}
induced by $P$ on~$V_n$. We denote this sequence by $\ftau(V_n,P)$ or, when
$P$ and $V_n$ are clear from the context, sometimes simply by~$\ftau$.

\mypara{The canonical strategy of~\PlQ and sequence representations.}
Given the multiset $V_n$, a strategy $P$ and value~$\ell$,
player \PlQ can compute an optimal strategy as follows.
First she computes the gain sequence~$\ftau(V_n,P)$ and
chooses the first $\ell$ gains in $\ftau(V_n,P)$. Then for each $0\leq i\leq k$
she proceeds as follows.
When $\fgain(V_n,p_i,p_{i+1})$ and $\fgainb(V_n,p_i,p_{i+1})$ are both chosen,
she places two points in $(p_i,p_{i+1})$ that win all voters in $(p_i,p_{i+1})$;
when only $\fgain(V_n,p_i,p_{i+1})$ is chosen,
she places one point in $(p_i,p_{i+1})$ that win $\fgain(V_n,p_i,p_{i+1})$ voters.
(By Observation~\ref{obs:lessgain} and the tie-breaking rules,
when $\fgainb(V_n,p_i,p_{i+1})$ is chosen it is always the case that $\fgain(V_n,p_i,p_{i+1})$
is also chosen.) The resulting optimal strategy $Q$ is called the \emph{canonical strategy}
of \PlQ with $\ell$ points against $P$ on $V_n$.

From now one we restrict the strategies of player~\PlQ to canonical
strategies. In an optimal strategy, player~\PlQ
places at most two points in any interval induced by a strategy $P=\{p_0,\ldots,p_{k+1}\}$,
and when we know that \PlQ places a single point (and similarly when she places two points)
then we also know where to place the point(s).
Hence, we can represent an optimal strategy $Q$, for given $V_n$ and $P$, by a
sequence~$\fmu(V,P,Q) \coloneqq \langle \mmu_0,\ldots,\mmu_k \rangle$
where $\mmu_i\in \{0,1,2\}$ indicates how many points \PlQ plays in
the interval~$(p_{i},p_{i+1})$. We call $\fmu(V,P,Q)$ the \emph{sequence representation}
of the strategy $Q$ against $P$ on $V_n$.
We denote the sequence representation of the canonical strategy of \PlQ with $\ell$ points
against $P$ on $V_n$ by $\fmu(V,P,\ell)$.
We have the following observation.
\begin{observation}
The canonical strategy of \PlQ with $\ell$ points against $P$ is the optimal strategy $Q$ with $\ell$ points against $P$ whose sequence representation is maximal in the lexicographical order.
\end{observation}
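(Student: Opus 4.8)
The plan is to prove two statements and combine them using the fact that the lexicographic order is total on sequences of length $k+1$: (a) the canonical strategy $Q^{\mathrm{can}}$ of \PlQ with $\ell$ points against $P$ is optimal; and (b) every optimal strategy $Q'$ satisfies $\fmu(V_n,P,Q')\leq_{\mathrm{lex}}\fmu(V_n,P,Q^{\mathrm{can}})$. Write $\fmu(V_n,P,Q)=\langle m_0(Q),\ldots,m_k(Q)\rangle$. Statements (a) and (b) together say precisely that $Q^{\mathrm{can}}$ is the optimal strategy whose sequence representation is lexicographically maximal, which is the assertion of the observation.

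For (a) I would spell out the greedy argument implicit in the preceding text. A restricted strategy of \PlQ with $\ell$ points amounts to choosing, for each interval $(p_i,p_{i+1})$, a subset $S_i$ of its two gains $\{\fgain(V_n,p_i,p_{i+1}),\fgainb(V_n,p_i,p_{i+1})\}$ subject to ``$\fgainb\in S_i\Rightarrow\fgain\in S_i$'' and $\sum_i|S_i|=\ell$, and with a suitable placement \PlQ wins exactly the sum of the chosen gains, which is the most she can win this way. Thus \PlQ wants to maximize the sum of $\ell$ chosen gains. Since $\ftau(V_n,P)$ lists all $\fgain$- and $\fgainb$-values in non-increasing order, the sum of its first $\ell$ entries is an upper bound on the sum of any legal selection, and this bound is attained by $Q^{\mathrm{can}}$: by Observation~\ref{obs:lessgain} and the tie-breaking rules, whenever the $\fgainb$-value of an interval lies among the first $\ell$ entries its matching $\fgain$-value precedes it and lies there too, so the canonical selection is legal. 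Hence $Q^{\mathrm{can}}$ is optimal.

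For (b) I would argue by contradiction. Suppose some optimal $Q'$ has $\fmu(V_n,P,Q')>_{\mathrm{lex}}\fmu(V_n,P,Q^{\mathrm{can}})$, and let $i$ be the first coordinate in which the two representations differ, so that $m_j(Q')=m_j(Q^{\mathrm{can}})$ for all $j<i$ while $m_i(Q')>m_i(Q^{\mathrm{can}})$. Because the set of gains picked in interval $j$ is determined by $m_j$, the two strategies pick the same gains in every interval $j<i$; call this common set $C$. Both strategies realize the same (maximum) total sum and both have size $\ell$, so after deleting $C$ the two leftover selections are drawn entirely from the gains of intervals $\geq i$ and have equal size and equal sum. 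Restricted to the gains of intervals $\geq i$ the sequence $\ftau(V_n,P)$ is still sorted, and $Q^{\mathrm{can}}$'s leftover is exactly its prefix of the appropriate length; hence $Q'$'s leftover, having the same size and sum, must use exactly the same multiset of gain \emph{values}. Now suppose $m_i(Q^{\mathrm{can}})=0$: then $Q'$ uses the value $\fgain(V_n,p_i,p_{i+1})$ at least once whereas $Q^{\mathrm{can}}$ picks no gain of interval $i$ at all, so a multiplicity count forces $Q^{\mathrm{can}}$ to realize that value through a gain of some interval $j>i$; by the tie-breaking rule favouring smaller interval indices, $\fgain(V_n,p_i,p_{i+1})$ precedes that gain in $\ftau(V_n,P)$ and would therefore belong to $Q^{\mathrm{can}}$'s prefix, contradicting $m_i(Q^{\mathrm{can}})=0$. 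The only other possibility, $m_i(Q^{\mathrm{can}})=1$ and $m_i(Q')=2$, is handled identically with $\fgainb(V_n,p_i,p_{i+1})$ in place of $\fgain(V_n,p_i,p_{i+1})$. This contradiction establishes (b).

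I expect the main obstacle to be precisely the multiplicity bookkeeping in (b): one has to verify that two equal-size, equal-sum selections from a sorted list use the same multiset of values, and then argue that among the (possibly several) gains realizing a repeated value there is always one coming from an interval of index larger than $i$, so that the tie-breaking rule on interval indices forces the critical $\fgain$- or $\fgainb$-value of interval $i$ into the canonical prefix. Part (a) is routine greedy reasoning, and the combination of (a), (b), and totality of $\leq_{\mathrm{lex}}$ is immediate.
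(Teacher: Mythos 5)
The paper states this observation without proof, so there is no official argument to compare against; your job is therefore to check whether the proposed argument actually works.

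Your decomposition into (a) optimality and (b) lex-maximality among optimal strategies is the right one, and both parts are essentially sound. Part (a) is the standard greedy argument: the first $\ell$ entries of the non-increasingly sorted gain sequence upper-bound the sum of any admissible selection, and by Observation~\ref{obs:lessgain} together with the second tie-breaking rule, an $\ngainb$-gain never precedes its companion $\ngain$-gain, so this prefix is itself an admissible selection. In part (b), your observation that the leftover selection of $Q^{\mathrm{can}}$ after stripping the common part $C$ is a \emph{prefix} of the sorted subsequence of gains from intervals $\geq i$ is the crucial structural fact, and equal-size, equal-sum subsets of a sorted sequence do indeed coincide as multisets of values. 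You correctly flag the multiplicity bookkeeping as the delicate point: to close the argument in the case $m_i(Q^{\mathrm{can}})=1$, $m_i(Q')=2$ one must note that $Q'$ uses $\fgainb(V_n,p_i,p_{i+1})$ while $Q^{\mathrm{can}}$ does not, and then count occurrences of the value $b\coloneqq\fgainb(V_n,p_i,p_{i+1})$ at interval-$i$ positions: $Q'$ has one more such occurrence than $Q^{\mathrm{can}}$, so since the value-multisets coincide, $Q^{\mathrm{can}}$ must realize $b$ at some interval $j>i$; the first tie-breaking rule then forces the earlier $\fgainb$-gain of interval $i$ into $Q^{\mathrm{can}}$'s prefix, a contradiction. (The $m_i(Q^{\mathrm{can}})=0$ case is the same with $\fgain(V_n,p_i,p_{i+1})$, as you say.) With that counting made explicit the proof is complete; I see no gap. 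One caveat worth noting, which is inherited from the paper rather than introduced by you, is the degenerate possibility that $Q^{\mathrm{can}}$ formally ``selects'' a zero-valued $\ngainb$-gain of the unbounded first or last interval, which would nominally place two points there even though a reasonable strategy for \PlQ uses at most one; this does not affect the gain since such a $\ngainb$-gain is zero, but a fully rigorous treatment would exclude these entries from the pool.
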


\subsection{The subproblems for a dynamic-programming solution} \label{sec:subproblem}
For clarity, in the rest of Section~\ref{sec:1d} we assume the multiset of voters $V$ does not have repetitive entries,
i.e we have a set of voters, and not a multiset. While all the results are easily extendible to multisets, dealing with them
adds unnecessary complexity to the text.

Our goal is to develop a dynamic-programming algorithm to compute~$\Gamma_{k^*,\ell^*}(V)$.
Before we can define the subproblems on which the dynamic program is based, we need to introduce
the concept of \emph{thresholds}, which is a crucial ingredient in the subproblems.

\mypara{Strict and loose thresholds.}
Consider an arbitrary gain sequence $\ftau(V_n,P) = \langle \tau_1,\ldots,\tau_{2k+2} \rangle$.
Recall that each $\tau_i$ is the $\ngain$-gain or $\ngainb$-gain of some interval~$(p_i,p_{i+1})$,
and that these gains are sorted in non-increasing order.
We call any integer value $\tau \in [\tau_{\ell+1},\tau_{\ell}]$ an
\emph{$\ell$-threshold} for \PlQ induced by $P$ on $V_n$,
or simply a \emph{threshold} if $\ell$ is clear from the context. We implicitly assume
$\tau_0\coloneqq n$ so that talking about $0$-threshold is also meaningful.
Note that when $\tau_{\ell}\geq\tau>\tau_{\ell+1}$ then the canonical strategy for~\PlQ
chooses all gains larger than $\tau$ and no gains smaller or equal to~$\tau$.
Hence, we call $\tau$ a \emph{strict} threshold if $\tau_{\ell}\geq\tau>\tau_{\ell+1}$.
On the other hand, when $\tau=\tau_{\ell+1}$ then gains of value
$\tau$ may or may not be chosen by the canonical strategy of \PlQ.
(Note that in this case for gains of value~$\tau$ to be picked, we would actually
need $\tau_{\ell}=\tau=\tau_{\ell+1}$.)
In this case we call $\tau$ a \emph{loose} threshold.
%
%

The idea will be to guess the threshold $\tau$ in an optimal solution
and then use the fact that fixing the threshold~$\tau$ helps us to
limit the strategies for \PlP and anticipate the behavior of \PlQ.
Let $P_{\mathrm{opt}}$ be the canonical strategy realizing $\Gamma_{k^*,\ell^*}(V)$.
We call any $\ell^*$-threshold of $P_{\mathrm{opt}}$
an \emph{optimal threshold}.
We devise an algorithm that gets a value $\tau$ as input and computes
$\Gamma_{k^*,\ell^*}(V)$ correctly if $\tau$ is an optimal threshold, and computes a value not greater
than $\Gamma_{k^*,\ell^*}(V)$, otherwise.

Clearly we only need to consider values of $\tau$ that are at most~$n^*$.
In fact, since each $\ngain$-gain or $\ngainb$-gain in a given
gain sequence corresponds to a unique subset of voters, the $\ell^*$-th largest
gain can be at most $n^*/\ell^*$, so we only need to consider $\tau$-values up to~$\lfloor n^*/\ell^*\rfloor$.
Observe that when there exists an optimal strategy that induces an $\ell^*$-threshold
equal to zero, then \PlQ can win all voters not explicitly covered by $P$.
In this case the trivial strategy is optimal for \PlP.
Our global algorithm is now as follows.
\myenumerate{
\item \label{step1} For all thresholds $\tau\in \{1,\ldots,\lfloor n^*/\ell^*\rfloor\}$,
      compute an upper bound on the number of voters \PlP can win with a strategy that has an $\ell^*$-threshold $\tau$.
      For the run where $\tau$ is an optimal threshold, the algorithm will return $\Gamma_{k^*,\ell^*}(V)$.
\item Compute the number of voters \PlP wins in the game $\langle V,k^*,\ell^*\rangle$
      by the trivial strategy.
\item Report the best of all solutions found.
}

\mypara{The subproblems for a fixed threshold~$\tau$.}
From now on we consider a fixed threshold value~$\tau\in \{1,\ldots,\lfloor n^*/\ell^*\rfloor\}$. The subproblems in our
dynamic-programming algorithm for the game~$\langle V,k^*,\ell^*\rangle$ have
several parameters.
\myitemize{
\item A parameter $n\in \{0,\ldots, n^*\}$, specifying that the subproblem is on the voter set $V_n$.
\item Parameters $k,\ell\in\{0,\ldots, n\}$, specifying that \PlP can use $k+1$ points and
      \PlQ can use $\ell$ points.
\item A parameter $\gamma\in\{0,\ldots, n\}$, specifying the number of voters \PlP must win.
\item A parameter $\delta\in\{\mathrm{strict, loose}\}$, specifying the strictness of the fixed $\ell$-threshold~$\tau$.
}
Intuitively, the subproblem specified by a tuple $\langle n,k,\ell,\gamma,\delta \rangle$ asks for a strategy~$P$
where \PlP wins at least $\gamma$ voters from $V_n$ and such that $P$ that induces an $\ell$-threshold of strictness~$\delta$,
against an opponent \PlQ using~$\ell$ points. Player~\PlP may use $k+1$ points and his objective will be to push
his last point, $p_{k+1}$ as far to the right as possible.  The value of the solution to such a subproblem,
which we denote by $\fy(n,k,\ell,\gamma,\delta)$, will indicate how far
to the right we can push $p_{k+1}$. Ultimately we will be interested in solutions where
\PlP can push $p_{k^*+1}$ all the way to $+\infty$, which means he can actually win~$\gamma$ voters
by placing only~$k^*$ points.
%

To formally define $\fy(n,k,\ell,\gamma,\delta)$, we need one final piece of notation.
Let $x\in \Reals \cup \{-\infty\}$, let $n\in\{1,\ldots, n^*\}$, and
let $\gain,\gainb$ be integers. For convenience, define $v_{n^*+1} \coloneqq \infty$.
Now we define the \emph{$(\gain,\gainb)$-span of $x$ to $v_{n+1}$},
denoted by $\fspan(x,n,\gain,\gainb)$, as
\begin{align*}
\mkern-36mu 
 \fspan&(x,n,\gain,\gainb) \coloneqq  \\
&
\mkern-30mu 
\begin{cases}
\parbox[c]{75mm}{the maximum real value $y\in(v_n,v_{n+1}]$ such that $\fgain(V,x,y) = \gain$ and $\fgainb(V,x,y) = \gainb$}
&  \mbox{if $x\neq -\infty$ and $y$ exists} \\
-\infty & \mbox{otherwise}
\end{cases}
\end{align*}
%
%
\begin{definition} \label{def:subproblem}
For parameters $n\in \{0,\ldots, n^*\}$, $k,\ell,\gamma\in\{0,\ldots, n\}$,
and $\delta\in\{\mathrm{strict, loose}\}$, we define the value
$\fy(n,k,\ell,\gamma,\delta)$ and what it means when a strategy $P$ \emph{realizes} this value, as follows.
\myitemize{
\item For $k=0$, we call it an \emph{elementary subproblem}, and define $\fy(n,k,\ell,\gamma,\delta) = v_{n+1}$ if
    \myenumerate{
    \item $\{v_{n+1}\}$ wins at least $\gamma$ voters from $V_{n}$, and
    \item $\{v_{n+1}\}$ induces an $\ell$-threshold $\tau$ with strictness $\delta$ on $V_n$,
    } 
and we define $\fy(n,k,\ell,\gamma,\delta) = -\infty$ otherwise.
In the former case we say that $P := \{v_{n+1}\}$ \emph{realizes} $\fy(n,k,\ell,\gamma,\delta)$. 

\bigskip

\item For $k>0$, we call it a \emph{non-elementary subproblem}, and $\fy(n,k,\ell,\gamma,\delta)$ is defined to be equal to
the maximum real value $y\in(v_n,v_{n+1}]$ such that there exists a strategy $P\coloneqq P' \cup \{\vy\}$ with $P'=\{p_1,\ldots,p_{k}\}$, integer values $n',\gain,\gainb$ with  $0\leq n' < n$ and $0\leq \gain,\gainb \leq n$, an integer $\jmu\in\{0,1,2\}$, and a $\delta'\in\{\strict,\loose\}$  satisfying the following conditions:
\myenumerate{
\item $P$ wins at least $\gamma$ voters from $V_{n}$,
\item $P$ induces an $\ell$-threshold $\tau$ with strictness $\delta$ on $V_n$,
\item $\fspan(p_k,n,\gain,\gainb) = \vy$,
\item $P'$ realizes $\fy(n',k-1,\ell-\jmu,\gamma-n+n'+\gain\gainplus_j\gainb,\delta')$,
\item Let $\fmu(V_{n'},P',\ell-j) \coloneqq \langle \mmu'_0,\ldots,\mmu'_k \rangle$ and $\fmu(V_n,P,\ell) \coloneqq \langle \mmu_0,\ldots,\mmu_{k+1} \rangle$. Then
   $\mmu'_i = \mmu_i$ for all $0\leq i<k$.
}  
When a set $P$ satisfying the conditions exists, we say that $P$ \emph{realizes} $\fy(n,k,\ell,\gamma,\delta)$. We define $\fy(n,k,\ell,\gamma,\delta) = -\infty$ if no such $P$ exists.
} 
\end{definition}
By induction we can show that if the parameters $n,k,\ell$ are not in a certain range, namely
if one of the conditions $\ell < 2(k+1)$ or $k \leq \|V_n\|$ is violated, then $\fy(n,k,\ell,\gamma,\delta)=-\infty$.
%
%
%
The next lemma shows we can compute $\Gamma_{k^*,\ell^*}(V)$
from the solutions to our subproblems.
\begin{lemma}
\label{lem:computeGamma}
Let $V=\{v_1,\ldots,v_{n^*}\}$ be a set of $n^*$ voters in $\Reals^1$. 
Let $0\leq k^*\leq n^*$ and $1\leq\ell^*\leq n^*$ be two integers such that
$\ell^* < 2k^*$ and $k^* < \|V\|$, and let $\tau$ be a fixed threshold. Then
\begin{equation} \label{eq:lem:1}
\Gamma_{k^*,\ell^*}(V) \ \geq \ \parbox[t]{100mm}{the maximum value of $\gamma$
with $0\leq \gamma \leq n^*$ for which there exist a $\delta\in\{\loose,\strict\}$ such that
$\fy(n^*,k^*,\ell^*,\gamma,\delta) = \infty $.}
\end{equation}
Moreover, for an optimal threshold $\tau_{\mathrm{opt}}>0$, the inequality changes to equality.
\end{lemma}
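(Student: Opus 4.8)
The plan is to prove the two directions separately: first the inequality~\eqref{eq:lem:1}, then the equality for an optimal threshold. For the inequality, suppose $\gamma$ is a value for which $\fy(n^*,k^*,\ell^*,\gamma,\delta)=\infty$ for some $\delta$, and let $P$ be a strategy realizing this value. Unwinding Definition~\ref{def:subproblem} recursively, I would show that $P$ (after dropping the artificial $+\infty$ point $p_{k^*+1}$) is a genuine strategy with $k^*$ points such that, against the canonical strategy of \PlQ with $\ell^*$ points, player \PlP wins at least $\gamma$ voters. The key bookkeeping is that the $\gamma$-parameter in condition~4, namely $\gamma - n + n' + \gain\gainplus_j\gainb$, exactly accounts for the voters in the rightmost interval $(p_k, y]$: there are $n-n'$ voters of $V_n$ strictly to the right of $v_{n'}$, of which \PlQ wins $\gain\gainplus_j\gainb$ in the interval $(p_k,y)$ by playing $j$ points there, leaving \PlP with $n - n' - \gain\gainplus_j\gainb$ of them (the voter $y=v_{n+1}$ itself is not in $V_n$, consistent with the indexing). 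Condition~5 guarantees that the sequence representation of \PlQ's canonical response on the prefix agrees with her response on the whole instance in the first $k$ slots, so that the recursively-counted gains really are the gains \PlQ extracts; and conditions~1--2 re-assert the global win-count and threshold-strictness at each level so no inconsistency creeps in. Since $\fy=\infty$ means $p_{k^*+1}$ is pushed to $+\infty$, the strategy uses only $k^*$ points, hence $\Gamma_{k^*,\ell^*}(V)\ge\gamma$.

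For the reverse inequality when $\tau=\tau_{\mathrm{opt}}>0$ is an optimal threshold, let $P_{\mathrm{opt}}$ be the canonical strategy realizing $\Gamma:=\Gamma_{k^*,\ell^*}(V)$, with points $p_1<\cdots<p_{k^*}$ and the convention $p_{k^*+1}=\infty$. I would show by induction (from right to left, i.e., on the number of points of $P_{\mathrm{opt}}$ used so far) that for each $i$ the prefix $P_i := \{p_1,\dots,p_i\}$ realizes $\fy(n_i, i-1, \ell_i, \gamma_i, \delta_i)$ for appropriate parameters, where $n_i$ is the index of the rightmost voter lying to the left of $p_i$, $\ell_i$ is the number of \PlQ-points the canonical response places in intervals $(p_0,p_1),\dots,(p_{i-1},p_i)$, $\gamma_i$ is the number of voters of $V_{n_i}$ that \PlP wins against that partial response, and $\delta_i$ is the strictness of the $\ell_i$-threshold $\tau$ induced by $P_i$ on $V_{n_i}$. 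By Observation~\ref{obs:banik} we may take $p_1\in V$, which handles the base case $k=0$ (the elementary subproblem). For the inductive step, one checks that the choices $n'=n_i$, $\gain,\gainb$ equal to the $\ngain$- and $\ngainb$-gains of $(p_i,p_{i+1})$, $j=\mmu_i$ the number of \PlQ points in that interval, and $\delta'=\delta_i$ satisfy all five conditions of Definition~\ref{def:subproblem} at level $i+1$ — conditions~3 and~4 are precisely the definitions of span and of the recursive parameters, condition~5 is the consistency of canonical responses under prefixing (which holds because canonical \PlQ-responses process intervals in sorted gain order with the fixed tie-breaking, and the prefix sees exactly the same interval-gains in the same order), and conditions~1--2 transfer the global guarantees. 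Taking $i=k^*+1$ yields $\fy(n^*,k^*,\ell^*,\Gamma,\delta)=\infty$ for $\delta=\delta_{k^*+1}$, so the maximum on the right-hand side of~\eqref{eq:lem:1} is at least $\Gamma$, and combined with the first direction we get equality.

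The main obstacle I anticipate is condition~5, the consistency of \PlQ's canonical strategy between a subproblem and its sub-subproblem. One must argue carefully that when \PlP plays a prefix $P'$ on $V_{n'}$ with a budget of $\ell-j$ points, \PlQ's canonical choice of which interval-gains to pick among $(p_0,p_1),\dots,(p_{k-1},p_k)$ is the same as the choice she makes in those same intervals when facing the full strategy $P$ on $V_n$ with $\ell$ points. This is where the tie-breaking rules in the definition of the gain sequence (earlier-interval gains precede later ones, $\ngain$-gain precedes $\ngainb$-gain of the same interval) and the definition of the threshold $\tau$ as the cutoff do the real work: both problems induce the same sorted subsequence of gains on the shared intervals, $\tau$ sits at the same position relative to them, and the strictness flag $\delta$ (together with $\delta'$) records exactly how ties at the value $\tau$ are resolved — so the number $j$ of points spent in $(p_k,p_{k+1})$ is forced and the remaining $\ell-j$ are distributed identically. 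Getting the strictness bookkeeping exactly right across the recursion, so that "strict" and "loose" propagate consistently and a strict outer threshold is compatible with the claimed inner $\delta'$, is the delicate point; the rest is routine accounting of voter counts.
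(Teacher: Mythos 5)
Your overall plan matches the paper's: the easy direction is a direct reading of condition~1 of Definition~\ref{def:subproblem}, and the hard direction is an induction over prefixes of the canonical optimal strategy $P_{\mathrm{opt}}$, using exactly the invariant you describe — the prefix realizes the subproblem $I_k=\langle n_k,k,\ell_k,\gamma_k,\delta_k\rangle$ with your parameters. The paper additionally isolates a Claim that the truncation $Q_k$ of the canonical response $Q_{\mathrm{opt}}$ to $[-\infty,p_{k+1}]$ is \emph{itself} the canonical response with $\ell_k$ points against $P_k$ on $V_{n_k}$; this single fact powers conditions~1, 4 and~5 simultaneously. You identify condition~5 as the crux but sketch it differently: you argue directly from the structure of the gain sequence and the tie-breaking rules, whereas the paper argues by contradiction, splicing a hypothetically better or lexicographically larger partial response into $Q_{\mathrm{opt}}$. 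Your direct route is plausible (it essentially resurfaces as Lemma~\ref{lem:1d:sufficient}), but as written it glosses over the case analysis on how $\gain,\gainb$ compare to $\tau$ and how the strictness flag flips; the contradiction argument sidesteps that.

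The substantive gap is that you prove only half of ``$P_i$ realizes $\fy(I_i)$.'' Verifying the five conditions of Definition~\ref{def:subproblem} for $y=p_{i+1}$ with $\gain,\gainb$ the gains of $(p_i,p_{i+1})$, $j=\mmu_i$, $\delta'=\delta_i$ only shows $\fy(I_i)\geq p_{i+1}$. To carry the induction forward you need the equality $\fy(I_i)=p_{i+1}$, because condition~4 at the \emph{next} level requires the prefix to be a realizer of its subproblem, i.e., to attain the \emph{maximum} $y$; if some other $P'_i=\{p'_1,\dots,p'_i,y\}$ with $y>p_{i+1}$ also satisfied the conditions, then $P_i$ would not realize $\fy(I_i)$ and the inductive hypothesis would fail. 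The paper devotes a separate paragraph to this: it assumes such a larger $y$ exists, shows that replacing the prefix of $P_{\mathrm{opt}}$ with $P'_i$ yields a strategy that is still optimal (this requires a non-trivial comparison of the gains \PlQ can extract from $(y,p_{i+2})$ versus $(p_{i+1},p_{i+2})$), and then contradicts the canonicity of $P_{\mathrm{opt}}$ since the spliced strategy is lexicographically larger. Without this maximality step your induction does not close, and it is precisely where the choice of $P_{\mathrm{opt}}$ as the \emph{canonical} (lexicographically largest) optimal strategy does real work.
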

\begin{proof}
It is clear that $\Gamma_{k^*,\ell^*}(V)$ is at least equal to the right-hand side in~\eqref{eq:lem:1}. Indeed, an equation $\fy(n^*,k^*,\ell^*,\gamma,\delta)=\infty$
implies by definition that there is a strategy $P'$ of $k^*$ points such that
$P'\cup \{\infty\}$ wins at least $\gamma$~voters against an opponent~\PlQ
with $\ell^*$ points, and then $P'$ must win $\gamma$ voters as well.
Next we prove the opposite direction for an optimal threshold~$\tau_{\mathrm{opt}}$.

Let $P_{\mathrm{opt}}$ be the canonical strategy realizing $\Gamma_{k^*,\ell^*}(V)$. By definition, $\tau_{\mathrm{opt}}$ is an $\ell^*$-threshold induced by $P_{\mathrm{opt}}$ on $V$.
Let $Q_{\mathrm{opt}}$
be the canonical strategy of \PlQ with $\ell^*$ points against $P_{\mathrm{opt}}$ on $V$
and let $\fmu(V,P_{\mathrm{opt}},\ell^*) = \langle m_0,\ldots,m_{k^*}\rangle$ be its sequence representation.
Define $P_k := \{p_1,\ldots,p_k,p_{k+1}\}$ and $Q_k := Q_{\mathrm{opt}}\cap [-\infty,p_{k+1}]$,
and let $n_k$ be the largest index such that~$v_{n_k} < p_{k+1}$.
Let $\ell_k = |Q_k|$, and $\gamma_k= |V_{n_k}[P_k\succeq Q_k]|$, that is,
$\gamma_k$ is the number of voters from $V_{n_k}$ won by $P_k$ against~$Q_k$.
Note that $\tau_{\mathrm{opt}}$ is an $\ell_k$-threshold of $P_k$ on $V_{n_k}$.
Let $\delta_k\in\{\strict,\loose\}$ indicate whether our fixed threshold~$\tau_{\mathrm{opt}}$ is
a strict or loose threshold induced by $P_k$ and $\ell_k$ on $V_{n_k}$,
where $p_{k^*+1} \coloneqq \infty$.

We will need the following claim.
\myclaimWithProof{
$Q_k$ is the canonical strategy of \PlQ with $\ell_k$ points against $P_k$ on $V_{n_k}$, and
for all $0 \leq k \leq k^*$ we have
$\fmu(V_{n_k},P_k,Q_k) = \langle \mmu_0,\ldots,\mmu_k, 0\rangle$.
}%
{The second part of the claim is obvious by the definition of sequence representation and by the definition of $Q_k$.

The first part of the claim can be shown by contradiction. Suppose $Q'_k$ is the canonical strategy of \PlQ with $\ell_k$ points against $P_k$ on $V_{n_k}$, where $Q'_k \neq Q_k$. Let $\fmu(V_{n_k},P_k,Q'_k)=\langle \mmu'_0,\ldots,\mmu'_k, 0 \rangle$. Consider the strategy $Q$ with sequence representation $\langle \mmu'_0,\ldots,\mmu'_k, \mmu_{k+1}, \ldots, \mmu_{k^*} \rangle$ of size $\ell^*$ against $P_{\mathrm{opt}}$.
We have two cases.
\begin{itemize}
\item $Q_k$ is not an optimal strategy against $P_k$. This means $Q$ wins more voters than $Q_{\mathrm{opt}}$ against $P_{\mathrm{opt}}$, because the winning of any given voter by \PlQ is only dependent on the number of points \PlQ places in the interval $(p_i,p_{i+1})$ containing that voter. But this contradicts the fact that $Q_{\mathrm{opt}}$ is an optimal strategy against $P_{\mathrm{opt}}$.
\item $Q_k$ is an optimal strategy against $P_k$, but $Q_k$ is smaller in the lexicographical
      order than $Q'_k$. This means that $Q$ is a strategy of equal gain as $Q_{\mathrm{opt}}$ whose sequence representation is lexicographically larger. But this contradicts the fact that $Q_{\mathrm{opt}}$ is a canonical strategy against $P_{\mathrm{opt}}$.
\end{itemize}
}
Let $I_k$ denote the subproblem given by $\langle n_k,k,\ell_k,\gamma_k,\delta_k\rangle$
for $\tau=\tau_{\mathrm{opt}}$.
Below we will show by induction on $k$ that for any $0\leq k\leq k^*$ we have
\begin{equation}
\label{eq:secondeq}
\fy(I_k) = p_{k+1} \mbox{ and $\fy(I_k)$ is realized by $P_k$},
\end{equation}
where we use $\fy(I_k)$ as a shorthand for $\fy(n_k,k,\ell_k,\gamma_k,\delta_k)$.
Note that~(\ref{eq:secondeq}) finishes the proof of the lemma. Indeed, $p_{k^*+1}=\infty$ by definition,
and $n_{k^*}=n^*$ which implies $V_{n_{k^*}}=V$. Hence,
$\gamma_{k^*} = \Gamma_{k^*,\ell^*}(V)$ and $\ell_{k^*}=\ell^*$,
and so there is a $\delta\in\{\loose,\strict\}$ such that $\fy(n^*,k^*,\ell^*,\Gamma_{k^*,\ell^*}(V),\delta_{k^*}) = \infty$,
thus finishing the proof.
It remains to prove~(\ref{eq:secondeq}).

\begin{description}
\item[Base case: $k=0$.]
By definition $n_0$ is such that $v_{n_0} < p_1$ and $v_{n_0+1}\geq p_1$.
Moreover, $p_1\in V$ by Observation~\ref{obs:banik}.
Hence, we have $v_{n_0+1} = p_1$, which establishes~(\ref{eq:secondeq}).
%

\item[Induction step: $k > 0$.]

We first prove that $y:=p_{k+1}$ satisfies all five conditions from
Definition~\ref{def:subproblem} for $\gain := \fgain(V,p_k,p_{k+1})$
and $\gainb := \fgainb(V,p_k,p_{k+1})$, and for $j := m_k$ and~$\delta' := \delta_k$.
This implies that $\fy(I_k) \geq p_{k+1}$.
After that we argue there is no larger value of $y$ satisfying all conditions,
thus finishing the proof.
\begin{enumerate}
\item $P_k$ wins at least $\gamma_k$ voters from $V_{n_k}$.
\vspace{2mm}

    $P_k$ wins $\gamma_k$ voters against~$Q_k$ by definition. Moreover, by the Claim above,
    $Q_k$ is an optimal strategy for \PlQ against $P_k$ on $V_{n_k}$.
    Hence, $P_k$ can win $\gamma_k$ voters from $V_{n_k}$.
\item $P_k$ induces an $\ell$-threshold $\tau_{\mathrm{opt}}$ with strictness $\delta_k$ on $V_{n_k}$.
\vspace{2mm}

    This is true by definition.
\item $\fspan(p_k,n_k,\gain,\gainb) = p_{k+1}$.
\vspace{2mm}

    Obviously $\fspan(p_k,n_k,\gain,\gainb) \geq p_{k+1}$. Now let $y := \fspan(p_k,n_k,\gain,\gainb)$
    and assume for a contradiction that $y > p_{k+1}$. In the next paragraph, we will argue that this implies that
    $P := \{p_1,\ldots,p_k,y,p_{k+2},\ldots,p_{k^*}\}$ realizes $\Gamma_{k^*,\ell^*}(V)$.
    This gives the desired contradiction since $P$ is then an optimal strategy that is
    lexicographically greater than the canonical strategy~$P_{\mathrm{opt}}$.

    Since the only difference between $P$ and $P_{\mathrm{opt}}$ is that $p_{k+1}$
    is moved to the right to the position~$y$,
    the intervals $(p_i,p_{i+1})$ only change for $i=k$ and for $i=k+1$.
    Thus the possible gains for \PlQ in all intervals stay the same,
    except possibly in $(p_k,p_{k+1})$ and $(p_{k+1},p_{k+2})$.
    Since $\gain = \fgain(V,p_k,p_{k+1})$ and $\gainb = \fgainb(V,p_k,p_{k+1})$
    and $y = \fspan(p_k,n_k,\gain,\gainb)$ by definition, the
    gains in $(p_k,p_{k+1})$ are the same as the gains in $(p_k,y)$.
    Finally, since $(y,p_{k+2})\subset (p_{k+1},p_{k+2})$, it is clear that
    \PlQ cannot win more voters in $(y,p_{k+2})$ against $P$
    than she can win in $(p_{k+1},p_{k+2})$ against~$P_{\mathrm{opt}}$
    by playing a single point inside these intervals. Similarly,
    she cannot win more voters by playing two points in $(y,p_{k+2})$,
    as compared to playing two points in $(p_{k+1},p_{k+2})$.
    Hence, $P$ wins at least the same number of voters as $P_{\mathrm{opt}}$
     and so $P$ realizes $\Gamma_{k^*,\ell^*}(V)$.
%
\item $P_{k-1}$ realizes $\fy(n_{k-1},k-1,\ell_k-j,\gamma_k-n_k+n_{k-1}+ \gain\gainplus_j\gainb,\delta_{k-1})$ for $j=\mmu_k$.
\vspace{2mm}

    By the induction hypothesis and since $\ell_{k-1} = \ell_k - j$ by definition,
    $P_{k-1}$ realizes $\fy(n_{k-1},k-1,\ell_k-j,\gamma_{k-1},\delta_{k-1})$.
    By the above Claim, $Q_{k-1}$ is optimal against $P_{k-1}$ on $V_{n_{k-1}}$ and $Q_k$ is
    optimal against $P_k$ on $V_{n_k}$. Hence, $\gamma_{k-1} = \gamma_k-n_k+n_{k-1}+ \gain\gainplus_j\gainb$, which proves the condition.
\item Let $\fmu(V_{n_{k-1}},P_{k-1},\ell_{k-1}) \coloneqq \langle \mmu'_0,\ldots,\mmu'_k \rangle$
      and $\fmu(V_{n_k},P_k,\ell_k) \coloneqq \langle \mmu_0,\ldots,\mmu_{k+1} \rangle$.
      Then $\mmu'_i = \mmu_i$ for all $0\leq i<k$.
\vspace{2mm}

      This is true by the above Claim.
\end{enumerate}
We still need to show that $p_{k+1}$ is the maximum value in $(v_n,v_{n+1}]$ satisfying the above mentioned conditions.
This is obvious for $k=k^*$. For $k<k^*$,
with a similar reasoning to the proof of the third condition,
we can show that $p_{k+1}$ is the maximum value in $(v_n,v_{n+1}]$ satisfying
the conditions, as otherwise $P_{\mathrm{opt}}$ cannot be the canonical
strategy realizing $\Gamma_{k^*,\ell^*}(V)$. Namely, we assume $\vy \in (v_n,v_{n+1}]$
with $\vy>p_{k+1}$ is the solution to $I_k$ realized by $P'_k := \{p'_1,\ldots,p'_k,\vy\}$,
and then we argue that $P := \{p'_1,\ldots,p'_k,\vy,p_{k+2},\ldots,p_{k^*}\}$
is an optimal solution, thus obtaining a contradiction since $P$
is lexicographically larger than $P_{\mathrm{opt}}$. It remains to
prove that~$P$ is optimal.

We show that any other strategy $Q$
against $P$ on $V$ cannot win more voters than $Q_{\mathrm{opt}}$ wins against $P_{\mathrm{opt}}$, which shows $P$ realizes $\Gamma_{k^*,\ell^*}(V)$
and is therefore optimal.
Let $Q'_k$ be the canonical strategy of \PlQ with $\ell_k$ points against $P'_k$ on $V_{n_k}$. As $P'_k$ wins at least $\gamma_k$ voters and by definition of $\gamma_k$, $P_k$
wins exactly $\gamma_k$ voters, the number of voters $Q'_k$ wins against $P'_k$ cannot be more than the number of voters $Q_k$, which is part of $Q_{\mathrm{opt}}$, wins against $P_k$.

Let $\gain\coloneqq\fgain(V,p_{k+1},p_{k+2})$, and $\gainb\coloneqq\fgainb(V,p_{k+1},p_{k+2})$.
Any strategy $Q$ against $P$ must select its gains from the gain sequence $\ftau(V,P)$. But, all the gains in $\ftau(V,P)$ are either the same gains
chosen by $Q'_k$ or $Q_{\mathrm{opt}}$ of a value at least $\tau_{\mathrm{opt}}$, or they have a value of at most $\tau_{\mathrm{opt}}$ because of the threshold $\tau_{\mathrm{opt}}$.
The only exceptions are the gains in interval $(y,p_{k+2})$
where by moving $p_{k+1}$ to the right $\gain$ does not increase, but $\gainb$ might increase.
But, even in that case when $\gainb$ increases and its value gets bigger than $\tau_{\mathrm{opt}}$ and it is chosen by $Q$, it means $\gain$
is chosen by both $Q_{\mathrm{opt}}$ and $Q$. Hence, any extra voters won by $Q$ through selecting the gain $\gainb$ are stolen from $\gain$ which means
the number of voters won by $Q$ does not increase.

\end{description}
\end{proof}

\paragraph{{\rm\emph{Remark.}}}
Usually in dynamic programming, subproblems have a clean non-recursive definition---the recursion only comes in when a recursive formula is given to compute the value of an optimal solution. Our approach is more complicated: Definition~\ref{def:subproblem} above gives a recursive but ``non-constructive' subproblem definition (and Lemma~\ref{lem:computeGamma} shows how to use it) and Lemma~\ref{lem:basis-for-dp} below will then give a different recursive formula to actually compute the solutions to the subproblems.



\subsection{Computing solutions to subproblems}
The solution to an elementary subproblem follows fairly easily from the definitions, and can be computed in constant time.

\begin{lemma}
\label{lem:elementary}
Assuming the voter set~$V$ is given in sorted order, we can find the solution to an
elementary subproblem in constant time.
\end{lemma}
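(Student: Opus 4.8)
The plan is to unfold Definition~\ref{def:subproblem} in the case $k=0$. Recall that $\fy(n,0,\ell,\gamma,\delta)$ is either $v_{n+1}$ or $-\infty$, and that it equals $v_{n+1}$ precisely when the one-point strategy $P=\{v_{n+1}\}$ (i)~wins at least $\gamma$ voters from $V_n$ against \PlQ with $\ell$ points, and (ii)~induces an $\ell$-threshold $\tau$ of strictness $\delta$ on $V_n$. Since $V$ is given as a sorted array, the value $v_{n+1}$ is available in $O(1)$ time, so it suffices to show that conditions (i) and (ii) can each be evaluated in $O(1)$ time.

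The main step is the observation that the gain sequence $\ftau(V_n,P)$ of the one-point strategy has a trivial closed form. The strategy $P=\{v_{n+1}\}$ induces the two intervals $(-\infty,v_{n+1})$ and $(v_{n+1},\infty)$, and since $v_1<\cdots<v_n<v_{n+1}$ all $n$ voters of $V_n$ lie in the first one. By the conventions for intervals with an infinite endpoint we have $\fgain(V_n,-\infty,v_{n+1})=n$ and $\fgainb(V_n,-\infty,v_{n+1})=0$, while both gains of $(v_{n+1},\infty)$ equal $0$; hence $\ftau(V_n,P)=\langle n,0,0,0\rangle$, with $\tau_0\coloneqq n$ implicit. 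Checking condition (ii) then reduces to a constant-size case distinction on $\ell$: if $\ell\geq 2$ the subproblem is out of range and $\fy=-\infty$ (by the remark following Definition~\ref{def:subproblem}); if $\ell=0$ then $\tau$ is an $\ell$-threshold iff $\tau=n$, in which case it is loose; and if $\ell=1$ then $\tau$ is an $\ell$-threshold iff $1\leq\tau\leq n$ (recall the fixed $\tau$ satisfies $\tau\geq 1$), in which case it is strict. So condition (ii) holds iff the relevant one of these constant-size tests passes with the prescribed strictness $\delta$.

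For condition (i) we use that the canonical strategy of \PlQ takes the first $\ell$ entries of $\ftau(V_n,P)$: if $\ell=0$ she wins nothing and \PlP wins all $n$ voters, whereas if $\ell\geq 1$ she takes the gain $n$ of the interval $(-\infty,v_{n+1})$, placing one point just to the left of $v_{n+1}$ and winning all of $V_n$, so \PlP wins $0$. Hence condition (i) holds iff $\ell=0$ (it is then automatic since $\gamma\leq n$) or $\gamma=0$ — once more a constant-time test. Combining the outcomes of (i) and (ii) yields $\fy(n,0,\ell,\gamma,\delta)\in\{v_{n+1},-\infty\}$ in $O(1)$ time.

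I do not expect a genuine obstacle here; the only care needed is with the boundary conventions — the value $v_{n^*+1}=\infty$ when $n=n^*$, the implicit $\tau_0=n$, and the out-of-range case $\ell\geq 2$ — each of which is settled by inspection before the case analysis above.
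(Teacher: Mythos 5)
Your proof takes essentially the same approach as the paper's: unfold Definition~\ref{def:subproblem} for $k=0$, observe that the gain sequence of $P=\{v_{n+1}\}$ on $V_n$ is the closed form $\langle n,0,0,0\rangle$, and reduce conditions~(i) and~(ii) to a constant-size case distinction on $\ell\in\{0,1\}$, with $\ell\geq 2$ handled by the out-of-range remark. One small remark: your condition~(ii) tests are stated a bit more tightly than the paper's --- for example, for $\ell=1$ you require $\delta=\strict$ for every $\tau\in\{1,\ldots,n\}$, whereas the paper's proof writes ``$\tau<n$ or ($\tau=n$ and $\delta=\strict$)'', leaving $\delta$ unconstrained when $\tau<n$ --- but your version is the one that follows literally from the definition of strict versus loose thresholds (here $\tau_2=0<\tau$ forces strictness), so if anything you are being more careful than the paper's exposition, and the constant-time conclusion is the same either way.
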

\begin{proof}
Consider an elementary subproblem $I = \langle n,0,\ell,\gamma,\delta\rangle$.
If $\ell=0$ then $\fy(I)=v_{n+1}$ if and only if the following two conditions hold:
(i) $\gamma \leq n$ and (ii) $\tau > n $ or $(\tau=n \mbox{ and } \delta = \loose)$.
If $\ell=1$ then $\fy(I)=v_{n+1}$ if and only if
(i) $\gamma = 0$ and (ii) $\tau < n$ or $(\tau=n \mbox{ and } \delta = \strict)$.
Otherwise $\fy(I)=-\infty$.
\end{proof}

By definition, in order to obtain a strategy $P$ realizing the solution to a non-elementary subproblem $I=\langle n,k,\ell,\gamma,\delta\rangle$ of size $k$, we need a solution to a smaller subproblem $I'=\langle n',k-1,\ell',\gamma',\delta'\rangle$ of size $k-1$ and add one point $\vy\in(v_n,v_{n+1}]$ to the strategy $P'=\{p_1,\ldots,p_k\}$ realizing $I'$. Thus by adding $y$, we extend the solution to $I'$ to get a solution to $I$.
To find the ``right'' subproblem $I'$, we guess some values for $n'$, $\gain$, $\gainb$, $j\in\{0,1,2\}$, and $\delta'\in\{\strict,\loose\}$; these values are enough to specify $I'$.
We note that there are just a polynomial number of cases and therefore a polynomial number of values for the value $\vy\in(v_n,v_{n+1}]$ which we want to maximize. Namely, there are $O(n)$ choices for the values $n'$, $\gain$, and $\gainb$, three choices for $j$, and two choices for $\delta'$. This makes $O(n^3)$ different cases to be considered for each subproblem $I$, in total. However, not all those subproblems can be extended to $I$. In the following definition, we list the triples $(\delta',j,\delta)$ that provide all the valid combinations that guarantee the extendibility of $I'$ to $I$.

Let $\gain$ and $\gainb$ be the $\ngain$-gain and $\ngainb$-gain of the interval $(p_k,\vy)$ in a strategy $P=\{p_1,\ldots,p_k,\vy\}$ with threshold $\tau$.
We define the following sets of triples depending on the relationship between $a$, $b$ and $\tau$:
\begin{align*}
\mkern-56mu 
\Delta&(\tau,a,b) \coloneqq \\
& 
\begin{cases}
\{(\loose,2,\loose),(\strict,2,\strict)\}
& \mbox{if } \gain > \tau \land \gainb > \tau \\
\{(\loose,1,\loose),(\strict,1,\loose),(\strict,2,\strict)\}
& \mbox{if } \gain > \tau \land \gainb = \tau \\
\{(\loose,1,\loose),(\strict,1,\strict)\}
& \mbox{if } \gain > \tau \land \gainb < \tau \\
\{(\loose,0,\loose),(\strict,0,\loose),(\strict,1,\loose),(\strict,2,\strict)\}
& \mbox{if } \gain = \tau \land \gainb = \tau \\
\{(\loose,0,\loose),(\strict,0,\loose),(\strict,1,\strict)\}
& \mbox{if } \gain = \tau \land \gainb < \tau \\
\{(\loose,0,\loose),(\strict,0,\strict)\}
& \mbox{if } \gain < \tau \land \gainb < \tau.
\end{cases}
\end{align*}
\begin{lemma}
\label{lem:1d:sufficient}
Let $P'=\{p_1,\ldots,p_k\}$ and $P\coloneqq P'\cup\{\vy\}$, be two reasonable strategies on $V_{n'}$ and $V_n$, where $n'=\argmax_{1\leq i \leq n^*} v_i < p_k$, $n=\argmax_{1\leq i \leq n^*} v_i < \vy$, and $\vy\in(v_n,v_{n+1}]$. Let $\gain = \fgain(V_n,p_k,y)$ and $\gainb = \fgainb(V_n,p_k,y)$,
and assume $\tau>0$ is an $(\ell-j)$-threshold of strictness $\delta'$ for \PlQ induced by $P'$ on $V_{n'}$, where $j\in\{0,1,2\}$. Then, there exists a triple $(\delta',\jmu,\delta)\in\Delta(\tau,\gain,\gainb)$ if and only if
\myenumerate{
\item \label{lem:1d:sufficient:cond1} $P$ induces an $\ell$-threshold $\tau$ with strictness $\delta$ on $V_n$,
\item \label{lem:1d:sufficient:cond2} Let $\fmu(V_{n'},P',\ell-j) \coloneqq \langle \mmu'_0,\ldots,\mmu'_k \rangle$ and $\fmu(V_n,P,\ell) \coloneqq \langle \mmu_0,\ldots,\mmu_{k+1} \rangle$. Then
   $\mmu'_i = \mmu_i$ for all $0\leq i<k$.
}
Moreover, this triple is unique if it exists.
\end{lemma}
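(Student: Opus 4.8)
The plan is to prove the biconditional by analyzing how adding the point $\vy$ to $P'$ changes the gain sequence, and to show that the six cases in the definition of $\Delta(\tau,\gain,\gainb)$ exhaustively capture every possibility. First I would set up notation carefully: the strategy $P'$ on $V_{n'}$ has a gain sequence $\ftau(V_{n'},P')$ in which $\tau$ is an $(\ell-j)$-threshold of strictness $\delta'$, and $P = P'\cup\{\vy\}$ on $V_n$ has a gain sequence $\ftau(V_n,P)$. The key structural observation is that passing from $(V_{n'},P')$ to $(V_n,P)$ splits the old interval $(p_k,+\infty)$ into $(p_k,\vy)$ and $(\vy,+\infty)$, and the voters in $(v_{n'},v_n]$ that lie in $(p_k,\vy)$ are now ``inside'' a bounded interval rather than the unbounded rightmost one. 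Thus the multiset of gains of $P$ on $V_n$ is obtained from the multiset of gains of $P'$ on $V_{n'}$ by: removing the $\ngain$-gain and $\ngainb$-gain of the old rightmost interval (which were $|V_{n'}\cap(p_k,\infty)|$ and $0$), and inserting the two new gains $\gain=\fgain(V_n,p_k,\vy)$ and $\gainb=\fgainb(V_n,p_k,\vy)$ for $(p_k,\vy)$, plus the new rightmost gains $|V_n\cap(\vy,\infty)|$ and $0$ for $(\vy,+\infty)$. However — and this is the crucial simplification — only the gains with value $\geq\tau$ affect whether a canonical \PlQ strategy picks them, and only $\gain$ and $\gainb$ among the genuinely ``new'' gains can be $\geq\tau$ in a way that matters (the rightmost-interval gain is forced to be $0$ under the running assumptions that force an optimal threshold $>0$, as discussed before Lemma~\ref{lem:computeGamma}; this needs to be pinned down).

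Given that reduction, the proof becomes a bookkeeping exercise on how the rank of $\tau$ in the sorted gain multiset changes. I would argue as follows. Let $r'$ be the number of gains of $P'$ on $V_{n'}$ that are strictly greater than $\tau$, and let $e'$ be the number equal to $\tau$; the condition that $\tau$ is an $(\ell-j)$-threshold of strictness $\delta'$ says $r' \leq \ell-j \leq r'+e'$, with $\delta'=\strict$ meaning $\ell-j$ is strictly between (or at the low end in the degenerate way the text describes) and $\delta'=\loose$ meaning $\ell-j = r'+e'$ — here I must transcribe exactly the $\strict$/$\loose$ convention from the ``Strict and loose thresholds'' paragraph (strict iff $\tau_{\ell}\geq\tau>\tau_{\ell+1}$, loose iff $\tau=\tau_{\ell+1}$). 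Now when we move to $P$ on $V_n$, we delete one gain $>\tau$ (the old $|V_{n'}\cap(p_k,\infty)|$, which exceeds $\tau$ because $k < \|V_n\|$-type bounds and because it is at least $\gain\geq\gainb$... actually it equals $\gain$ plus whatever is beyond $\vy$, so it is $\geq\gain$) and one gain equal to $0$, and insert $\gain$, $\gainb$, $|V_n\cap(\vy,\infty)|$, and $0$. The net effect on $(r,e)$, the counts for $P$, depends precisely on whether $\gain$ and $\gainb$ are $>\tau$, $=\tau$, or $<\tau$ — which is exactly the six-way case split defining $\Delta$. In each case I compute the new rank interval $[r, r+e]$ and check: for which $j$ and $\delta$ does $\tau$ become an $\ell$-threshold of $P$ on $V_n$ with strictness $\delta$ (condition~\ref{lem:1d:sufficient:cond1}), and for which does $\fmu(V_{n'},P',\ell-j)$ agree with $\fmu(V_n,P,\ell)$ on coordinates $0,\dots,k-1$ (condition~\ref{lem:1d:sufficient:cond2}). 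The second condition amounts to saying the canonical \PlQ strategy makes the same choices in the old ``inner'' intervals, which holds exactly when the set of chosen gains among the unchanged ones is the same — i.e., when the threshold rank shifts consistently with removing $j$ points' worth of budget and reallocating $\gain\gainplus_j\gainb$ worth of new gains. Doing this arithmetic in all six cases yields precisely the listed triples.

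For uniqueness, I would note that within each of the six cases, the value of $j$ is forced by $\delta$ together with whether $\gain$ and $\gainb$ are chosen: if $\gain>\tau$ it is always chosen so $j\geq 1$; if additionally $\gainb>\tau$ then $j=2$; the ties ($\gain=\tau$ or $\gainb=\tau$) are where a $\strict$-vs-$\loose$ choice of $\delta'$ can toggle whether the tied gain is counted, and inspecting the tables shows that each $(\delta',\delta)$ pair appearing in a given case comes with a single $j$. Equivalently, $(\delta',\delta)$ determines $j$ and hence the whole triple; so at most one triple works. Combined with the ``if'' direction (each listed triple does satisfy \ref{lem:1d:sufficient:cond1}--\ref{lem:1d:sufficient:cond2}, verified by the same case computation run in reverse), this gives the claim.

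The main obstacle I anticipate is twofold. First, correctly handling the tie cases ($\gainb=\tau$ especially): here whether the canonical \PlQ picks the $\ngainb$-gain of $(p_k,\vy)$ is not determined by $\tau$ alone, and one must track the strictness flag as a proxy for ``is there leftover budget exactly at level $\tau$''; getting the $\strict$/$\loose$ propagation right — matching the tie-breaking rules in the gain-sequence definition (which put $(p_i,\cdot)$ before $(p_j,\cdot)$ for $i<j$, and $\ngain$-gains before $\ngainb$-gains) — is where errors are likely. Second, I must verify that the ``new rightmost gain'' $|V_n\cap(\vy,\infty)|$ and its companion $0$ really do not interfere — i.e., that in the subproblem regime the relevant thresholds are positive and these far-right gains sit below or are otherwise inert relative to $\tau$, using the reasoning (already invoked in the global algorithm) that a zero threshold reduces to the trivial strategy and is handled separately. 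Once those two points are nailed down, the rest is a finite, if tedious, verification.
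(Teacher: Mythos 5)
Your high-level approach matches the paper's: both proofs hinge on the observation that $\ftau(V_n,P)$ is obtained from $\ftau(V_{n'},P')$ by inserting the two new gains $\gain,\gainb$ (the rightmost interval's gains are $(0,0)$ both before and after), and then proceed by case analysis on how $\gain,\gainb$ compare to $\tau$ and what $(\delta',j)$ is. Your rank-arithmetic framing via $(r',e')$ is a somewhat more systematic bookkeeping than the paper's, which simply traces through the six $(\delta',j)$ subcases for the single case $\tau=\gain=\gainb$ and asserts the rest are similar. That is a legitimate difference in organization, not in substance.

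However there are two concrete errors that would derail the computation if carried out literally. First, you have the strict/loose convention backwards. Translating $\tau_\ell\geq\tau>\tau_{\ell+1}$ (strict) into your $(r',e')$ notation gives $\ell-j=r'+e'$, and $\tau=\tau_{\ell+1}$ (loose) gives $r'\leq\ell-j<r'+e'$; you wrote the opposite, assigning $\ell-j=r'+e'$ to loose. You did flag that the convention needed to be ``transcribed exactly,'' but as written your case computations would produce the wrong table. Second, your uniqueness argument is incorrect: you claim that $(\delta',\delta)$ determines $j$ and ``each $(\delta',\delta)$ pair appearing in a given case comes with a single $j$.'' This is false already in the case $\gain=\tau=\gainb$, where $\Delta$ contains both $(\strict,0,\loose)$ and $(\strict,1,\loose)$, so $(\strict,\loose)$ appears with two different $j$'s. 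The uniqueness the lemma needs is the easier statement in the other direction: $\delta'$ and $j$ are fixed by the hypotheses, and for each fixed pair $(\delta',j)$ the table $\Delta(\tau,\gain,\gainb)$ contains at most one triple with that first and second component, so $\delta$ is forced.

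A smaller imprecision: you attribute the vanishing of the rightmost gains to the running assumption $\tau>0$, even floating the possibility that $|V_{n'}\cap(p_k,\infty)|$ or $|V_n\cap(\vy,\infty)|$ might be positive. In fact both are exactly zero by the definitions $n'=\argmax_i\, v_i<p_k$ and $n=\argmax_i\, v_i<\vy$: all voters of $V_{n'}$ lie strictly left of $p_k$ and all voters of $V_n$ strictly left of $\vy$. The assumption $\tau>0$ is still needed (to keep the $0$-valued gains of the unbounded intervals from ever being selected and to justify the ``exactly $j$ of $\{\gain,\gainb\}$ are picked'' reformulation of condition~\ref{lem:1d:sufficient:cond2}), but it is not what makes those gains zero. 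That reformulation itself — condition~\ref{lem:1d:sufficient:cond2} holds iff canonical \PlQ against $P$ with budget $\ell$ picks exactly $j$ gains from interval $k$ — is the right lemma to isolate, and your vague phrase ``shifts consistently with removing $j$ points' worth of budget'' should be replaced by exactly that precise statement, with the tie-break rule (lower-index intervals first, $\ngain$ before $\ngainb$) invoked to justify that the picks among intervals $0,\ldots,k-1$ coincide in the two sequences.
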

\begin{proof}
We prove the lemma for the case $\tau=\gain=\gainb$. The proof for the other cases is similar.

Let $Q'$ and $Q$ be canonical strategies associated with $\fmu(V_{n'},P',\ell-j)$ and $\fmu(V_n,P,\ell)$, respectively, and
let $\ftau(V_{n'},P') = \langle \tau'_1,\ldots,\tau'_{2k+2} \rangle$.
We note that $\ftau(V_n,P)$ can be constructed by inserting $\gain$ and $\gainb$ after the last $\tau$ value in $\ftau(V_{n'},P')$. There are six cases,
\begin{itemize}
\item $\delta'=\loose$ and $j=0$.

In this case $(\loose,0,\loose)$ is the only corresponding triple in $\Delta(\tau,\gain,\gainb)$.
As $\delta'=\loose$, we have $\tau'_{\ell-j+1}=\tau$, and therefore $\gain$ and $\gainb$ are not selected by $Q$. Therefore $Q'=Q$ and the strictness status also does not change. Hence both conditions are satisfied and $\delta=\loose$.

\item $\delta'=\strict$ and $j=0$.

In this case $(\strict,0,\loose)$ is the only corresponding triple in $\Delta(\tau,\gain,\gainb)$.
As $\delta'=\strict$, we have $\tau'_{\ell-j+1}<\tau$, and therefore $\gain$ and $\gainb$ are not selected by $Q$; however, the strictness of the $\ell$-threshold $\tau$ for $P$ is different. Hence both conditions are satisfied and $\delta=\loose$.
\item $\delta'=\loose$ and $j=1$.

As $\delta'=\loose$, we have $\tau'_{\ell-j+1}=\tau$, and therefore $\gain$ and $\gainb$ are not selected by $Q$. However, the gain $\tau'_{\ell-j+1}$ is selected by $Q$ which violates the condition \ref{lem:1d:sufficient:cond2}. We also note that there is no corresponding triple in $\Delta(\tau,\gain,\gainb)$ for this case.
\item $\delta'=\strict$ and $j=1$.

In this case $(\strict,1,\loose)$ is the only corresponding triple in $\Delta(\tau,\gain,\gainb)$.
As $\delta'=\strict$, we have $\tau'_{\ell-j+1}<\tau$, and therefore $\gain$ is selected by $Q$ and $\gainb$ is not selected, and the strictness of the $\ell$-threshold $\tau$ for $P$ is different. Hence both conditions are satisfied and $\delta=\loose$.

\item $\delta'=\strict$ and $j=2$.

In this case $(\strict,2,\strict)$ is the only corresponding triple in $\Delta(\tau,\gain,\gainb)$.
As $\delta'=\strict$, we have $\tau'_{\ell-j+1}<\tau$, and therefore both $\gain$ and $\gainb$ are selected by $Q$, and the strictness of the $\ell$-threshold $\tau$ for $P$ is the same. Hence both conditions are satisfied and $\delta=\strict$.
\item $\delta'=\loose$ and $j=2$.

As $\delta'=\loose$, we have $\tau'_{\ell-j+1}=\tau$, and therefore just $\gain$ is selected by $Q$, while $\gainb$ is not selected. However, the gain $\tau'_{\ell-j+1}$ is selected by $Q$ which violates the condition \ref{lem:1d:sufficient:cond2}. We also note that there is no corresponding triple in $\Delta(\tau,\gain,\gainb)$ for this case.
\end{itemize}
\end{proof}

%
%
\begin{lemma}
\label{lem:basis-for-dp}
For a non-elementary subproblem $I=\langle n,k,\ell,\gamma,\delta\rangle$, we have
\begin{multline*}
\fy(n,k,\ell,\gamma,\delta) = \max_{0\leq n' < n}\max_{0\leq\gain\leq n}\max_{0\leq\gainb\leq n}\max_{(\delta',j,\delta)\in\Delta(\tau,\gain,\gainb)}\\
\fspan\big(\fy(n',k-1,\ell-\jmu,\gamma-n+n'+\gain\gainplus_\jmu\gainb,\delta'),n,\gain,\gainb\big).
\end{multline*}
\end{lemma}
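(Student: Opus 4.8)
The plan is to prove the two inequalities separately, showing that the left-hand side $\fy(n,k,\ell,\gamma,\delta)$ is at least the maximum on the right, and at most that maximum.

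\textbf{The ``$\geq$'' direction.} Suppose the maximum on the right is attained (and is not $-\infty$) by some choice of $n'$, $\gain$, $\gainb$, and a triple $(\delta',j,\delta)\in\Delta(\tau,\gain,\gainb)$, giving a value $\vy = \fspan(\fy(n',k-1,\ell-j,\gamma',\delta'),n,\gain,\gainb)$ where $\gamma' := \gamma-n+n'+\gain\gainplus_j\gainb$. Let $P' = \{p_1,\ldots,p_k\}$ be a strategy realizing $\fy(n',k-1,\ell-j,\gamma',\delta')$, so $p_k = \fy(n',k-1,\ell-j,\gamma',\delta')$ and $n' = \argmax_{i} v_i < p_k$. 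Set $P := P'\cup\{\vy\}$. I will verify that $P$ satisfies all five conditions of Definition~\ref{def:subproblem} for the subproblem $I$ with the witnesses $n', \gain, \gainb, j, \delta'$. Conditions (3) and (4) are immediate from the choice of $\vy$ and $P'$. For conditions (2) and (5) I invoke Lemma~\ref{lem:1d:sufficient}: since $(\delta',j,\delta)\in\Delta(\tau,\gain,\gainb)$ and $\tau$ is an $(\ell-j)$-threshold of strictness $\delta'$ induced by $P'$ on $V_{n'}$ (which holds because $P'$ realizes a subproblem with threshold parameter $\delta'$), the lemma gives exactly that $P$ induces an $\ell$-threshold $\tau$ with strictness $\delta$ on $V_n$ and that the sequence representations agree on indices $0\leq i<k$. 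Condition (1)---that $P$ wins at least $\gamma$ voters from $V_n$ against an optimal \PlQ using $\ell$ points---follows from a counting argument: the voters of $V_n$ split into those in $V_{n'}$ and the $n-n'$ voters in $(p_k, v_{n+1}]$; by condition (5) the canonical \PlQ strategy behaves identically on the first $k$ intervals whether we consider $P'$ with $\ell-j$ points on $V_{n'}$ or $P$ with $\ell$ points on $V_n$, so \PlP retains his $\gamma'$ voters there, and in the interval $(p_k,\vy)$ the remaining $j$ points of \PlQ cost \PlP exactly $\gain\gainplus_j\gainb$ of the $n-n'$ new voters, leaving $\gamma' + (n-n') - \gain\gainplus_j\gainb = \gamma$. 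Hence $\fy(I)\geq \vy$.

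\textbf{The ``$\leq$'' direction.} Conversely, let $\vy := \fy(n,k,\ell,\gamma,\delta)$ and assume it is not $-\infty$; let $P = P'\cup\{\vy\}$ with $P' = \{p_1,\ldots,p_k\}$ realize it, together with the witness values $n', \gain, \gainb, j, \delta'$ guaranteed by Definition~\ref{def:subproblem}. From condition (3), $\vy = \fspan(p_k, n, \gain, \gainb)$, and from condition (4), $p_k = \fy(n',k-1,\ell-j,\gamma-n+n'+\gain\gainplus_j\gainb,\delta')$. It remains to check that the triple $(\delta',j,\delta)$ is one of the triples appearing in $\Delta(\tau,\gain,\gainb)$, so that this combination is actually considered in the maximum on the right. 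For this I again appeal to Lemma~\ref{lem:1d:sufficient}: conditions (2) and (5) of Definition~\ref{def:subproblem} are precisely conditions~\ref{lem:1d:sufficient:cond1} and~\ref{lem:1d:sufficient:cond2} of that lemma, and $\gain = \fgain(V_n,p_k,\vy)$, $\gainb = \fgainb(V_n,p_k,\vy)$ match its hypotheses, so Lemma~\ref{lem:1d:sufficient} (the ``only if'' direction) yields $(\delta',j,\delta)\in\Delta(\tau,\gain,\gainb)$. Therefore $\vy = \fspan(\fy(n',k-1,\ell-j,\gamma-n+n'+\gain\gainplus_j\gainb,\delta'),n,\gain,\gainb)$ is one of the terms over which the right-hand-side maximum ranges, so $\fy(I)\leq$ that maximum.

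\textbf{Main obstacle.} The routine parts are the bookkeeping for conditions (3), (4), and the counting identity for condition (1); the delicate point is ensuring that the hypothesis of Lemma~\ref{lem:1d:sufficient}---that $\tau$ is an $(\ell-j)$-threshold of strictness $\delta'$ induced by $P'$ on $V_{n'}$---is genuinely available in the ``$\geq$'' direction. This needs the earlier structural fact (provable by the induction already sketched after Definition~\ref{def:subproblem}) that whenever $\fy(n',k-1,\ell-j,\gamma',\delta')\neq -\infty$, any realizing strategy $P'$ does induce an $(\ell-j)$-threshold $\tau$ of strictness $\delta'$ on $V_{n'}$---which is built into condition (2) of the definition at the previous level, so it is in fact automatic. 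The one genuine subtlety is the boundary behavior when $k-1=0$ (the elementary case), where ``$P'$ induces a threshold'' must be read through Lemma~\ref{lem:elementary}; I would handle this by noting that the elementary definition was set up precisely so that the threshold-strictness condition is enforced there as well, so the same argument goes through uniformly.
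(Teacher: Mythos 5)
Your proof is correct and takes essentially the same approach as the paper: both directions are proved by matching the five conditions of Definition~\ref{def:subproblem} against the terms of the maximum, with Lemma~\ref{lem:1d:sufficient} doing the work of translating conditions (2) and (5) into membership of $(\delta',j,\delta)$ in $\Delta(\tau,a,b)$. The only cosmetic differences are that the paper dispatches the degenerate parameter ranges ($\ell\geq 2(k+1)$ or $k>\|V_n\|$) explicitly at the outset, and that for condition (1) the paper compresses your counting argument into the single observation that conditions (2), (5), and $\tau>0$ force $m_k=j$; your expanded version of that step is equivalent.
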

\begin{proof}
As stated earlier, when one of the conditions $\ell < 2(k+1)$ or $k \leq \|V_n\|$ is violated, the left hand side evaluates to $-\infty$. When this happens, then the corresponding condition for each subproblem $\fy(n',k-1,\ell-\jmu,\gamma-n+n'+\gain\gainplus_\jmu\gainb,\delta')$ is also violated, i.e. $(\ell-\jmu) < 2((k-1)+1)$ or $(k-1) \leq \|V_{n'}\|$. Hence, each of these subproblems and the right hand side evaluates to $-\infty$, too. When $\ell < 2(k+1)$ and $k \leq \|V_n\|$ are both satisfied, we prove the lemma as follows.

Let $\vy$ be the solution to $I$ and let this solution be realized by the strategy $P\coloneqq P'\cup\{\vy\}$ for some $P'=\{p_1,\ldots,p_k\}$. By definition, $P'$ realizes $\fy(n',k-1,\ell-\jmu,\gamma-n+n'+\gain\gainplus_\jmu\gainb,\delta')$  for some $\delta'\in\{\loose,\strict\}$ and $\jmu \in \{0,1,2\}$ where $\gain = \fgain(V_n,p_k,\vy)$, $\gainb = \fgainb(V_n,p_k,\vy)$, and $n'=\argmax_i v_i < p_k$. By the second and fifth conditions of a subproblem and Lemma~\ref{lem:1d:sufficient}, we have $(\delta',\jmu,\delta)\in\Delta(\tau,\gain,\gainb)$, which shows
\begin{multline*}
\fy(n,k,\ell,\gamma,\delta) \leq \max_{0\leq n' < n}\max_{0\leq\gain\leq n}\max_{0\leq\gainb\leq \gain}\max_{(\delta',j,\delta)\in\Delta(\tau,\gain,\gainb)}\\
\fspan(\fy(n',k-1,\ell-\jmu,\gamma-n+n'+\gain\gainplus_\jmu\gainb,\delta'),n,\gain,\gainb).
\end{multline*}

Conversely, consider a value $\vy=\fspan(\vy',n,\gain,\gainb)>-\infty$, where $\vy'=\fy(n',k-1,\ell-\jmu,\gamma-n'+n''+\gain\gainplus_\jmu\gainb,\delta')$ is realized by $P'=\{p_1,\ldots,p_k\}$ for some values $n'$,$\gain$,$\gainb$,$\delta'$, and $j$ such that $(\delta',j,\delta)\in\Delta(\tau,\gain,\gainb)$. Let $P\coloneqq P'\cup\{\vy\}$. It is not hard to verify that the conditions for $\vy$ to be a solution to $I$ realized by $P$ are satisfied by these values. Indeed, we have:
\begin{enumerate}
\item $P$ wins at least $\gamma$ voters from $V_{n}$.

By the second and the fifth conditions and the fact that $\tau>0$, we have $\fmu_k=j$ which shows \PlP can win at least $\gamma$ voters.

\item $P$ induces an $\ell$-threshold $\tau$ with strictness $\delta$ on $V_n$.

This condition is satisfied by Condition \ref{lem:1d:sufficient:cond1} of Lemma~\ref{lem:1d:sufficient} and by the uniqueness of $\delta$.

\item $\fspan(p_k,n,\gain,\gainb) = \vy$.

This condition is satisfied by the assumption.

\item $P'$ realizes $\fy(n',k-1,\ell-\jmu,\gamma-n+n'+\gain\gainplus_\jmu\gainb,\delta')$.

This condition is satisfied by the assumption.

\item Let $\fmu(V_{n'},P',\ell-j) = \langle \mmu'_0,\ldots,\mmu'_k \rangle$ and $\fmu(V_n,P,\ell) = \langle \mmu_0,\ldots,\mmu_{k+1} \rangle$. Then
   $\mmu'_i = \mmu_i$ for all $0\leq i<k$.

This condition is satisfied by Condition \ref{lem:1d:sufficient:cond2} of Lemma~\ref{lem:1d:sufficient}.
\end{enumerate}

As all the conditions are satisfied, we have
\begin{multline*}
\fy(n,k,\ell,\gamma,\delta) \geq \max_{0\leq n' < n}\max_{0\leq\gain\leq n}\max_{0\leq\gainb\leq \gain}\max_{(\delta',j,\delta)\in\Delta(\tau,\gain,\gainb)}\\
\fspan(\fy(n',k-1,\ell-\jmu,\gamma-n+n'+\gain\gainplus_\jmu\gainb,\delta'),n,\gain,\gainb),
\end{multline*}
which completes the proof.
\end{proof}

If we can compute the $\fspan$ function efficiently, we can compute all the solutions by dynamic programming and solve the problem.
However, a solution based on a trivial dynamic-programming algorithm will be of running time complexity $\lfloor n^*/\ell^*\rfloor \,\cdot\, O(k^* \ell^* (n^*)^2 )\,\cdot\, O((n^*)^3 f(n^*)) = O(k^* (n^*)^6 f(n^*))$ where $\lfloor n^*/\ell^*\rfloor$ is the total number of choices for the threshold, $O(k^* \ell^* (n^*)^2 )$ is the number of subproblems for each threshold,
and $O((n^*)^3 f(n^*))$ is the time needed to solve each subproblem where $f(n)$ is the time needed to compute the $\fspan(x,n,a,b)$ function. This algorithm is quite slow. More importantly
it is not easy to compute the $\fspan$ function. In the following, we introduce some new concepts to compute the $\fspan$ function and also get a better running time.

\subsection{Computing the \texorpdfstring{$\fspan$}{span} function using gain maps}
\begin{figure}
  \centering
  \includegraphics{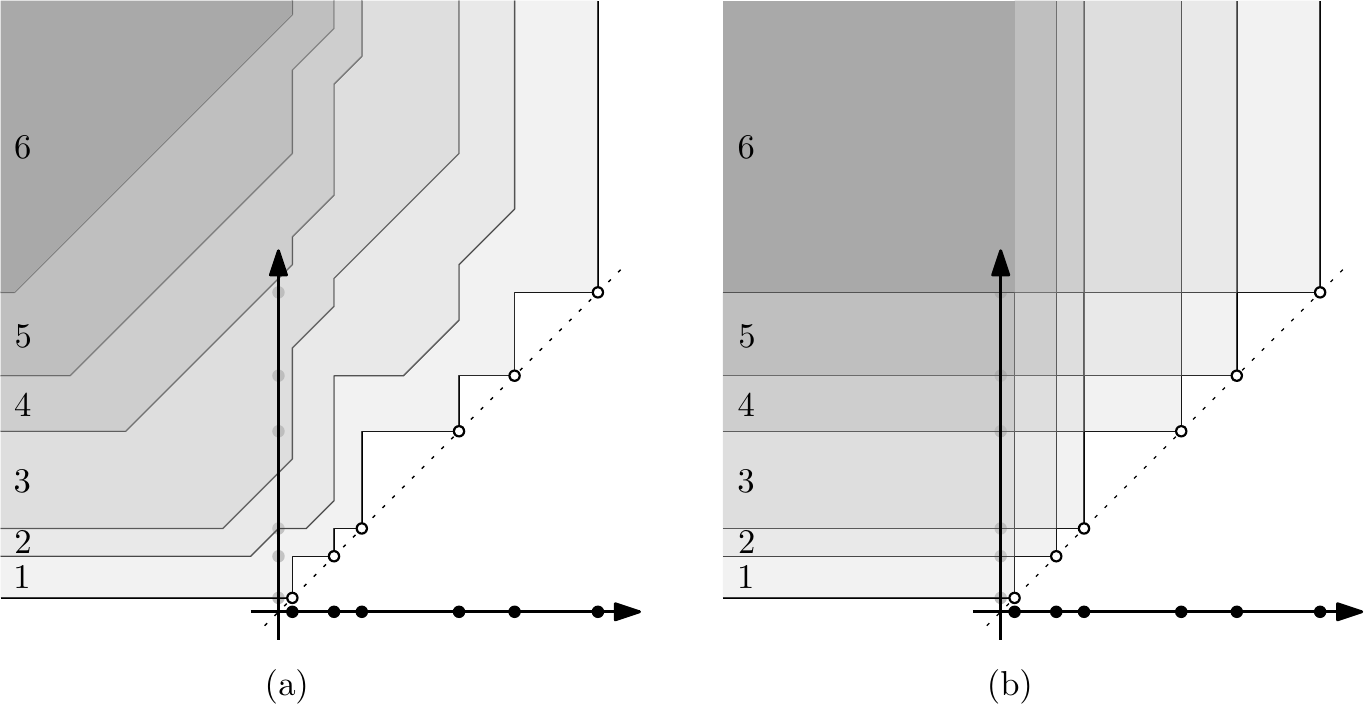}\\
  \caption{{\bf a)} $A$-map of $V=\{1,4,6,13,17,23\}$ with the corresponding $\ngain$-gain for each region. {\bf b)}~$B$-map of $V$ with the corresponding $\ngainb$-gain for each region.}
  \label{fig:gainmap}
\end{figure}
Before we give the algorithm we introduce the \emph{gain map}, which we need
to compute the $\fspan$ function.
Consider an arbitrary strategy $P$ of \PlP on $V$, and recall that such a strategy
induces open intervals of the form~$(p_i,p_{i+1})$ where \PlQ can place her points.
We can represent any possible interval~$(x,y)$ that may arise
in this manner as a point $(x,y)$ in the plane.
Thus the locus of all possible intervals is the region $R := \{ (x,y) : x < y \}$.
We will define two subdivisions of this region, the $A$-map and the $B$-map,
and the gain map will then be the overlay of the $A$-map and the $B$-map.
\medskip

The $A$-map is the subdivision of $R$ into regions $A^t$ and $B^t$, for $0\leq t \leq n^*$, defined
as follows:
\begin{align*}
A^t &:= \{ (x,y) : \fgain(V,x,y) = t \}  \mbox{\hspace*{0mm} and \hspace*{0mm} } \\ 
B^t &:= \{ (x,y) : \fgain(V,x,y) + \fgainb(V,x,y)=t \}. 
\end{align*}
In other words, $A^t$ is the locus of all intervals $(x,y)$ such that,
if $(x,y)$ is an interval induced by $P$, then \PlQ can win $t$ voters
 (but no more than $t$) from $V\cap (x,y)$ by placing a single point in~$(x,y)$.
To construct the $A$-map, let $A^{\geq t}$ denote the locus of all intervals $(x,y)$ such that
$\fgain(V,x,y) \geq t$.
Note that $A^t = A^{\geq t} \setminus A^{\geq t+1}$. For $1\leq i \leq n^*-t+1$,
let $V_i^t := \{ v_i,\ldots, v_{i+t-1} \}$ and define
\begin{align*}
\mkern-26mu
A^{\geq t}_i := \{ (x,y) : &V_i^t \subset (x,y) \mbox{ and}\\
&\mbox{\PlQ can win all voters in $V_i^t$ by placing a single point in $(x,y)$} \}. 
\end{align*}
Then we have
$
A^{\geq t}_i = \{ (x,y) : x < v_i \mbox{ and } y > v_{i+t-1} \mbox{ and } y > x + 2(v_{i+t-1}-v_i) \}.
$
Here the conditions $x < v_i$ and $y > v_{i+t-1}$ are needed to guarantee that $V_i^t \subset (x,y)$.
The condition $y > x + 2(v_{i+t-1}-v_i)$ implies that $V_i^t$ can be covered with an interval
of length $(y-x)/2$, which is necessary and sufficient for \PlQ to be able to win all these voters.
Note that each region~$A^{\geq t}_i$ is the intersection of three halfplanes, bounded
by a vertical, a horizontal and a diagonal line, respectively.

Since \PlQ can win at least $t$ voters in inside $(x,y)$ with a single point
if she can win at least $t$ consecutive voters with a single point, we have
$A^{\geq t} = \bigcup_{i=1}^{n^*-t+1}A^{\geq t}_i$.
Thus $A^{\geq t}$ is a polygonal region, bounded from below and from the right by a
a polyline consisting of horizontal, vertical, and diagonal segments,
and the regions $A^{t}$ are sandwiched between such polylines;
see Figure~\ref{fig:gainmap}a. We call the polylines that form
the boundary between consecutive regions $A^{t}$ \emph{boundary polylines}.

The $B$-map can be constructed in a similar, but easier manner. Indeed,
$B^t$ is the locus of all intervals such that \PlQ can win $t$ voters
(but no more) from $V\cap (x,y)$, and this is the case if and only if
$|V\cap (x,y)| = t$. Hence, $B^t$ is the union of the rectangular regions
$[v_i,v_{i+1}) \times (v_{i+t},v_{i+t+1}]$ (intersected with $R$), for $0\leq i\leq n^*-t$,
where $v_0\coloneqq -\infty$ and $v_{n^*+1} \coloneqq \infty$,
as shown in Figure~\ref{fig:gainmap}b. 
%
\medskip

As mentioned, by overlaying the $A$- and $B$-map, we get the \emph{gain map}.
For any given region on this map, the intervals corresponding to points inside this region
have equal $\ngain$-gain and equal $\ngainb$-gain.
\begin{lemma}
The complexity of the gain-map is $O((n^*)^2)$.
\end{lemma}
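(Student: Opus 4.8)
The plan is to bound the complexities of the $A$-map and the $B$-map separately and then invoke the standard fact that the overlay of two planar subdivisions of complexities $C_1$ and $C_2$ has complexity $O(C_1 + C_2 + I)$, where $I$ is the number of intersection points between their edges. So first I would argue that the $B$-map has complexity $O((n^*)^2)$: it is a grid-like subdivision whose cells are the rectangles $[v_i,v_{i+1}) \times (v_{i+t},v_{i+t+1}]$ for $0 \leq i \leq n^*-t$ and $0 \leq t \leq n^*$, and there are $O((n^*)^2)$ such pairs $(i,t)$, each a single axis-parallel rectangle, so the total number of vertices, edges, and faces is $O((n^*)^2)$.

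Next I would bound the complexity of the $A$-map. The key structural fact established in the text is that $A^{\geq t} = \bigcup_{i=1}^{n^*-t+1} A^{\geq t}_i$, where each $A^{\geq t}_i$ is the intersection of three halfplanes bounded by one vertical, one horizontal, and one diagonal (slope-1) line. Since all the diagonal edges have the same slope, and the vertical/horizontal lines come from the $O(n^*)$ coordinates $v_i$, the boundary polyline of $A^{\geq t}$ (equivalently, the boundary polyline separating $A^{t-1}$ from $A^{t}$) is an $xy$-monotone staircase-like curve; I would argue each such boundary polyline has $O(n^*)$ vertices, essentially because it is the upper-left envelope of $n^*-t+1$ translated copies of a fixed three-sided region, and envelopes of such well-behaved pseudo-lines have linear complexity. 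Summing over the $O(n^*)$ values of $t$ gives total complexity $O((n^*)^2)$ for the $A$-map. (One has to be a little careful that consecutive boundary polylines do not cross, which follows from the monotonicity $A^{\geq t+1} \subseteq A^{\geq t}$.)

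Finally I would count the intersections between edges of the $A$-map and edges of the $B$-map. A single boundary polyline of the $A$-map is $x$-monotone and $y$-monotone; its diagonal pieces all have slope $1$, and its axis-parallel pieces are horizontal or vertical. A horizontal or vertical grid line of the $B$-map crosses each such monotone polyline $O(1)$ times in the relevant range, but more simply: each of the $O(n^*)$ $A$-boundary polylines, being monotone, crosses each of the $O(n^*)$ horizontal grid lines at most once and each of the $O(n^*)$ vertical grid lines at most once, giving $O((n^*)^2)$ per polyline — that is too weak. Instead I would argue more carefully: within a fixed horizontal slab $(v_j, v_{j+1}]$ and fixed vertical slab $[v_i,v_{i+1})$ of the $B$-map grid (an $O((n^*)^2)$ cells), the portion of any one $A$-boundary polyline inside that cell is a single segment (vertical, horizontal, or diagonal) because the breakpoints of the polyline occur only at $x$- or $y$-coordinates equal to some $v_m$; and the $A$-boundary polylines are nested, so at most $O(n^*)$ of them can pass through a fixed cell — hmm, this still needs the right accounting. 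The cleanest route: each $A$-edge is a segment lying in one of $O(n^*)$ vertical lines, $O(n^*)$ horizontal lines, or $O(n^*)$ diagonal lines (the lines $y = x + 2(v_{i+t-1}-v_i)$, of which there are $O((n^*)^2)$ a priori, but only $O(n^*)$ distinct values of $v_{i+t-1}-v_i$ actually appear on the envelope per... ) — I expect the main obstacle to be exactly this: showing the number of $A$-$B$ edge crossings is $O((n^*)^2)$ and not $O((n^*)^3)$. The fix is that the $B$-map is a grid refinement of $R$ by the lines $x = v_i$ and $y = v_j$, and the $A$-map is already refined by these same lines at its vertices; so overlaying them only subdivides $A$-faces and $A$-edges without creating new vertices beyond where $A$-diagonal edges meet $B$-grid lines — and each of the $O((n^*)^2)$ $A$-edges is crossed by $B$-grid lines only at points that become new vertices, with the total number of such crossings telescoping to $O((n^*)^2)$ because within each of the $O((n^*)^2)$ grid cells the $A$-map restricted to that cell is a set of parallel (slope-1) segments whose count, summed over all cells, is the total number of diagonal $A$-edges, which is $O((n^*)^2)$. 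Combining the three bounds via the overlay complexity formula yields complexity $O((n^*)^2)$ for the gain map, as claimed.
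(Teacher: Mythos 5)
Your overall plan (bound the $A$-map, the $B$-map, and the $A$-$B$ crossings separately, then invoke the overlay-complexity formula) matches the spirit of the paper's proof, and your bounds on the $A$- and $B$-map complexities are fine. The problem is in the crossing count, and it is caused by a concrete arithmetic slip that sent you off the correct path.

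You write that ``each of the $O(n^*)$ $A$-boundary polylines, being monotone, crosses each of the $O(n^*)$ horizontal grid lines at most once and each of the $O(n^*)$ vertical grid lines at most once, giving $O((n^*)^2)$ per polyline --- that is too weak.'' The count here is wrong: one crossing per grid line and $O(n^*)$ grid lines gives $O(n^*)$ crossings \emph{per polyline}, not $O((n^*)^2)$ per polyline. Multiplying by the $O(n^*)$ boundary polylines gives $O((n^*)^2)$ crossings in total, which is exactly the bound you need --- and this is precisely the argument the paper uses. Because of the miscount you discarded the correct argument as ``too weak'' and then improvised a replacement that does not hold up: the claim that, summed over all grid cells, the number of slope-$1$ sub-segments ``is the total number of diagonal $A$-edges'' is false. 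Each grid-line crossing of a diagonal $A$-edge splits it into an additional sub-segment, so the summed count equals (number of diagonal $A$-edges) plus (number of their grid-line crossings); asserting the two quantities are equal silently assumes there are no crossings, which is circular. Restoring the corrected per-polyline count repairs the proof and makes the ad hoc telescoping argument unnecessary.
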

\begin{proof}
The boundary polylines in the $A$-map are $xy$-monotone and
comprised of vertical, horizontal, and diagonal lines.
The $B$-map is essentially a grid of size $O((n^*)^2)$ defined by the lines $x=v_i$ and $y=v_i$,
for $1\leq i\leq n^*$. Since each of these lines intersects any $xy$-monotone polyline
at most once---in a point or in a vertical segment---the
complexity of the gain map is also $O((n^*)^2)$.
\end{proof}
Using the gain map, we can compute the values $\fspan(x_0,n,\gain,\gainb)$
for a given $x_0\in\Reals$ and for all triples $n,\gain,\gainb$ satisfying $1\leq n \leq n^*$, and
$0\leq \gain \leq n^*$ and $0\leq \gainb \leq n^*$, as follows.
First, we compute the intersection points of the vertical line~$x=x_0$ with
(the edges of) the gain map, sorted by increasing $y$-coordinates.
(If this line intersects the gain map in a vertical segment, we take the topmost endpoint of the segment.)
Let $(x_0,y_1),\ldots,(x_0,y_z)$ denote this sorted sequence of
intersection points, where $z\leq 2n^*$ denotes the number of intersections.
Let $\gain_i$ and $\gainb_i$ denote the $\ngain$-gain and $\ngainb$-gain
of the interval corresponding to the point $(x_0,y_i)$, and let $\gain_{z+1}$ and $\gainb_{z+1}$
denote the $\ngain$-gain and $\ngainb$-gain of the unbounded region intersected by the line~$x=x_0$.
Define $n_i=\argmax_{n} v_{n} < y_i$. Then we have
\begin{equation} \label{eq:span-sweep}
\fspan(x_0,n,\gain,\gainb) =
\begin{cases}
y_i &  \mbox{if $a=a_i$ and $b=b_i$, and $n=n_i$, for some $1\leq i \leq z$} \\
+\infty & \mbox{if $a=a_{z+1}$ and $b=b_{z+1}$ and $n=n^*$} \\
-\infty & \mbox{for all other triples $n,a,b$} \\
\end{cases}
\end{equation}

Our algorithm presented below moves a sweep line from left to right over the gain map.
During the sweep we maintain the intersections of the sweep line with the
gain map. It will be convenient to maintain the intersections with the
$A$-map and the $B$-map separately. We will do so using two sequences,
$\fa(x_0)$ and $\fb(x_0)$.
\myitemize{
\item The sequence $\fa(x_0)$ is the sequence of all diagonal or horizontal edges
      in the $A$-map that are intersected by the line~$x=x_0$, ordered from
      bottom to top along the line. (More precisely, the sequence contains
      (at most) one edge for any boundary polyline. When the sweep line reaches the endpoint of such an
      edge, the edge will be removed and it will be replaced by the next non-vertical
      edge of that boundary polyline, if it exists.)
\item The sequence $\fb(x_0)$ is the sequence of the $y$-coordinates of the horizontal segments in the $B$-map intersected by the line $x=x_0$, ordered from
      bottom to top along the line.
}
The number of intersections of the $A$-map, and also of the $B$-map,
with the line~$x=x_0$ is equal to $n^*-n_0+1$, where $n_0=\argmin_{n} v_{n} > x_0$.
Hence, the sequences $\fa(x_0)$ and $\fb(x_0)$ have length~$n^*-n_0+1 \leq n^*+1$.

If we have the sequences $\fa(x_0)$ and $\fb(x_0)$ available then,
using Equation~(\ref{eq:span-sweep}), we can easily find all triples
$n,\gain,\gainb$ such that $\fspan(x_0,n,\gain,\gainb) \neq -\infty$
(and the corresponding $y$-values)
by iterating over the two sequences.
We summarize the results of this section in the following observation.
\begin{observation}
\label{obs:span}
If we know the sequences $\fa(x_0)$ and $\fb(x_0)$ then we can compute all the values $\fspan(x_0,n,\gain,\gainb)$, with $1\leq n\leq n^*$ and $0\leq a,b\leq n$,
that are not equal to $-\infty$ in $O(n^*)$ time in total.
\end{observation}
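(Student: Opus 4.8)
The plan is to prove Observation~\ref{obs:span} by showing that, given the two sorted sequences $\fa(x_0)$ and $\fb(x_0)$, we can enumerate in linear time exactly those triples $(n,\gain,\gainb)$ for which $\fspan(x_0,n,\gain,\gainb)\neq -\infty$, together with the associated $y$-value. By Equation~(\ref{eq:span-sweep}), $\fspan(x_0,n,\gain,\gainb)$ is finite only at the (at most $2n^*$) intersection points of the vertical line $x=x_0$ with the gain map, plus the single ``$+\infty$'' case coming from the unbounded region; so it suffices to produce this sorted list of intersection points $(x_0,y_1),\ldots,(x_0,y_z)$ and, for each, the corresponding $(\gain_i,\gainb_i,n_i)$.

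First I would observe that the intersection points of $x=x_0$ with the gain map are precisely the merge of the intersection points with the $A$-map and with the $B$-map: the gain map is their overlay, and along a single vertical line an overlay edge is simply an $A$-edge or a $B$-edge (a vertical line meets no new overlay vertices beyond those of the two maps, except possibly where an $A$-edge and a $B$-edge cross, which only refines a region without changing the relevant gains between consecutive breakpoints). The sequence $\fa(x_0)$ gives, bottom-to-top, the non-vertical $A$-edges crossed, hence the $y$-coordinates where $\fgain(V,x_0,y)$ changes value; similarly $\fb(x_0)$ gives the $y$-coordinates where $\fgainb$ (equivalently $|V\cap(x_0,y)|$, since on the $B$-map $\gain+\gainb=|V\cap(x_0,y)|$) changes. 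Each of these two sequences has length at most $n^*+1$ by the length bound stated just before the observation. I would then do a standard linear-time merge of $\fa(x_0)$ and $\fb(x_0)$ by $y$-coordinate, walking a pointer up each sequence; at every step the current pair of pointers determines the current $\ngain$-gain and the current $\ngain$-plus-$\ngainb$-gain (hence $\gainb$), and the current $y$-interval tells us which $n_i = \argmax_n\{v_n < y\}$ applies (advancing a third pointer over the voter coordinates, also monotonically). This yields all triples with a finite, non-$+\infty$ span value, each emitted once, in $O(n^*)$ total time; the lone $+\infty$ triple $(n^*,\gain_{z+1},\gainb_{z+1})$ is read off from the topmost (unbounded) region and handled separately. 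Finally, every triple $(n,\gain,\gainb)$ not encountered in this walk has $\fspan(x_0,n,\gain,\gainb)=-\infty$ by Equation~(\ref{eq:span-sweep}), so we need not emit anything for it, and the caller can treat any unqueried triple as $-\infty$ by default.

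The main obstacle I anticipate is bookkeeping rather than anything deep: one must be careful about the half-open nature of the interval $(v_n,v_{n+1}]$ in the definition of $\fspan$ (the clause ``the maximum real value $y\in(v_n,v_{n+1}]$''), so that when the vertical line meets the gain map in a vertical segment we correctly take the topmost endpoint, and so that the reported $y_i$ really is the supremum of $y$-values realizing the prescribed $(\gain,\gainb)$ within the correct voter slab. One must also confirm that the $A$-edge/$B$-edge crossings interior to a slab do not split a $(\gain,\gainb)$-constant region in a way that would force more than one reported $y$-value per triple --- they do not, because crossing a $B$-edge changes $\gainb$ and crossing an $A$-edge changes $\gain$, so consecutive breakpoints with the same $(\gain,\gainb)$ cannot occur. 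With these conventions pinned down, the merge argument gives the claimed $O(n^*)$ bound and completes the proof.
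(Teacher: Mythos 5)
Your proof is correct and takes essentially the same approach as the paper, which simply notes (just before the observation) that one can iterate over the two sequences $\fa(x_0)$ and $\fb(x_0)$ and, via Equation~(\ref{eq:span-sweep}), read off each intersection point's $(\gain,\gainb,n)$ triple and $y$-value. Your extra attention to the half-open interval convention, vertical segments, and the merge bookkeeping fills in details the paper leaves implicit but does not change the argument.
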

This observation, together with Lemma~\ref{lem:basis-for-dp} forms the basis
of our dynamic-programming algorithm.


\subsection{The sweep-line based dynamic-programming algorithm}

Usually in a dynamic-programming algorithm, the value of a subproblem is computed
by looking up the values of certain smaller subproblems. In our case it is hard
to determine which smaller subproblems we need, so we take the opposite approach:
whenever we have computed the value of a subproblem we determine which other subproblems
can use this value, and we update their solutions if necessary.
To this end we will use a sweep-line approach, moving a vertical line from left to right
over the gain map. We will maintain a table~$X$, indexed by subproblems, such that
when the sweep line is at position~$x_0$, then $X[I]$ holds the best solution
known so far for subproblem~$I$, where the effect of all the subproblems with
solution smaller than $x_0$ have been taken into account. When our sweep
reaches a subproblem~$I'$, then we check which later
subproblems~$I$ can use $I'$ in their solution, and we update
the solutions to these subproblems. 

%
Recall the algorithm works with a fixed threshold value $\tau\in\{1,\ldots,\lfloor n^*/\ell^*\rfloor\}$ and that its goal is to compute the values $\fy(n^*,k^*,\ell^*,\gamma,\delta)$ for all
$0\leq \gamma\leq n^*$ and $\delta\in\{\strict,\loose\}$.
Our algorithm maintains the following data structures.
\myitemize{
\item $A[0..n^*]$ is an array that stores the sequence $\fa(x_0)$, where $x_0$ is the current
      position of the sweep line and $A[i]$ contains the $i$-th element in the sequence.
      When the $i$-th element does not exist then $A[i]=\NA$.
\item Similarly, $B[0..n^*]$ is an array that stores the sequence $\fb(x_0)$.
\item $X$: This is a table with an entry for each
      subproblem~$I=\langle n,k,\ell,\gamma,\delta \rangle$ with
      $0\leq n\leq n^*$, and $0\leq k\leq k^*$ and $0\leq \ell\leq \ell^*$,
      and $0\leq \gamma\leq n^*$ and $\delta\in\{\strict,\loose\}$.
      When the sweep line is at position~$x_0$, then $X[I]$ holds the best solution
      known so far for subproblem~$I$, where the effect of all the subproblems with
      solution smaller than $x_0$ have been taken into account. More precisely, in the
      right-hand side of the equation in Lemma~\ref{lem:basis-for-dp} we have
      taken the maximum value over all subproblems $I' = \langle n',k-1,\ell-\jmu,\gamma-n+n'+\gain\gainplus_\jmu\gainb,\delta' \rangle$
      with $\fy(I') < x_0$.
      In the beginning of the algorithm the entries for elementary subproblems are
      computed using Lemma~\ref{lem:elementary} and all other entries have value~$-\infty$.
\item $E$: This is the event queue, which will contain four types of events, as explained below.
}
The event queue $E$ is a min-priority queue on the $x$-value of the events.
There are four types of events, as listed next, and when events have the
same $x$-value then the first event type (in the list below) has higher priority,
that is, will be handled first. When two events of the same type have equal
$x$-value then their order is arbitrary. Note that events with the
same $x$-value are not degenerate cases---this is inherent to the
structure of the algorithms, as many events take place at $x$-coordinates corresponding to voters.
\begin{description}
\item[$A$-map events, denoted by $e_A(\gain,s,s')$:] At an $A$-map event, the edge~$s$
     of the $A$-map ends---thus the $x$-value of an $A$-map event is the $x$-coordinate
     of the right endpoint of $s$---and the array~$A$ must be updated
     by replacing it with the edge $s'$.
     Here $s'$ is the next non-vertical edge along the
     boundary polyline that $s$ is part of, where $s' = \NA$ if $s$ is the last
     non-vertical edge of the boundary polyline. The value~$\gain$ indicates that
     the edge $s$ is on the boundary polyline between $A^a$ and $A^{a+1}$.
     In other words $s$ (and $s'$, if it exist) are the $a$-th intersection
     point, $0\leq a < n^*$, with the $A$-map along the current sweep line,
     and so we must update the entry $A[a]$ by setting $A[a] \leftarrow s'$.
\item[$B$-map events, denoted by $e_B(v_n)$:] At a $B$-map event, a horizontal edge of
      the $B$-map ends. This happens when the sweep line reaches a voter $v_n$---that is,
      when $x_0=x_n$---and so the $x$-value of this event is~$v_n$.
      The bottommost intersection of he sweep line with the $B$-map
      now disappears (see Figure~\ref{fig:gainmap}b), and so
      we must update $B$ by shifting all other intersection points one position
      down in $B$ and setting $B[n^*-n]\leftarrow\NA$.
\item[Subproblem events, denoted by $e_X(n',k',\ell',\gamma',\delta')$:]
    At a subproblem event the solution to the subproblem given by
    $I' = \langle n',k',\ell',\gamma',\delta' \rangle$ is known and
    the $x$-value of this event is equal to $\fy(I')$.
    Handling the subproblem event for $I'$ entails deciding which later
    subproblems~$I$ can use $I'$ in their solution and how they can use it,
    using the sets $\Delta(\tau,\gain,\gainb)$, and updating
    the solutions to these subproblems.

    In the beginning of the algorithm all the events associated with elementary subproblems are known.
    The events associated with non-elementary subproblems are added to the event queue
    when handling an update event $e_E(v_n)$, as discussed next.
\item[Update events, denoted by $e_E(v_n)$:] At the update event happening at $x$-value~$v_n$,
    all subproblem events of size~$n$ are added to the event queue~$E$. These are simply the
    subproblems $\langle n,k,\ell,\gamma,\delta\rangle$ for all $k,\ell,\gamma\in\{0,\ldots,n\}$ and
    $\delta\in\{\strict,\loose\}$. The reason we could not add them at the start of the algorithm
    was that the $x$-value of such a subproblem~$I$ was now known yet. However, when we
    reach $v_n$ then $\fy(I)$ is determined, so we can add the event to $E$ with $\fy(I)$
    as its $x$-value.
\end{description}
The pseudocode below summarizes the algorithm.
\begin{algorithm}
\LinesNumbered
\DontPrintSemicolon
\SetArgSty{}
\SetKwIF{If}{ElseIf}{Else}{if}{}{else if}{else}{end if}
        \For {$i \leftarrow 0$ \KwTo $n^*-1$} {
             $A[i]\leftarrow (v_i,v_{i+1})-(v_{i+1},v_{i+1})$; \hspace*{3mm} $B[i]\leftarrow v_{i+1}$ \Comment*[r]{define $v_0\coloneqq v_1-1$}
        }
     $A[n^*]\leftarrow \NA$; \hspace*{3mm} $B[n^*]\leftarrow \NA$\;
	 Initialize $X$ by the solutions to elementary subproblems \;
        Initialize $E$ by all map events, update events, and elementary subproblem events \;
        \While {$E$ is not empty} { \label{alg:event-loop}
            $e \leftarrow \mathrm{extractMin}(E)$; \hspace*{3mm} $x_0 \leftarrow x$-value of $e$\;
            \Switch{$e$}{
                \Case{$e_A(\gain,s,s')$}{
			 $A[\gain] \leftarrow s'$ \;
                }
                \Case{$e_E(v_n)$} {
			  $B[n^*-n] \leftarrow \NA$\;
		        \For {$i \leftarrow 0$ \KwTo $n^*-n-1$} {
             		$B[i]\leftarrow v_{n+i+1}$\;
        		}
                }
                \Case{$e_X(n',k',\ell',\gamma',\delta')$} {
                    \For {all $\fspan(x_0,n,\gain,\gainb)=y$ where $y\neq -\infty$} {
				\For {all $(\delta',j,\delta) \in \Delta(\tau,\gain,\gainb)$} {
				$I \leftarrow \langle n,k'+1,\ell'+j,\gamma'+n-n'-\fg^j(\gain,\gainb),\delta \rangle$ \;
                           	$X(I) \leftarrow \max(X(I),y)$ \;
				}
                    }
                }
                \Case{$e_E(v_n)$} {
                    Add all the events for subproblems of size $n$ to $E$, as explained above \;
                }
            }
        }
    \caption{$\ComputeSolutions(\tau,V,k^*,\ell^*)$}\label{alg:sweepline}
\end{algorithm}
%
\begin{lemma}
\label{lem:alg-correct}
Algorithm~\ref{alg:sweepline} correctly computes the solutions for subproblems
$\langle n,k,\ell,\gamma,\delta\rangle$ for the given value $\tau$,
for all $n,k,\ell,\gamma,\delta$ with $0\leq n \leq n^*$, and $0\leq k,\ell,\gamma \leq n$, and $\delta\in\{\strict,\loose\}$, and $\ell < 2(k+1)$.
The running time of the algorithm is $O(k^*\ell^*(n^*)^3)$.
\end{lemma}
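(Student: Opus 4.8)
The plan is to establish correctness by strong induction on the voter‑set size~$n$, and then to read off the running time from a count of the four event types together with the work charged to each. The induction invariant I would use is: for every $n\in\{0,\ldots,n^*\}$ and every subproblem $I=\langle n,k,\ell,\gamma,\delta\rangle$ in the stated range, at the instant the update event $e_E(v_n)$ is extracted from~$E$ we already have $X[I]=\fy(I)$, so that $e_X(I)$ is then inserted with $x$-value exactly $\fy(I)$, and $X[I]$ is never changed afterwards. Proving this invariant for all $n$ is exactly the correctness claim. The base case is the elementary subproblems ($k=0$): their values are written into~$X$ at initialization using Lemma~\ref{lem:elementary}, and their events are placed in~$E$ at the start, so the invariant holds for them at once.

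For the inductive step I appeal to Lemma~\ref{lem:basis-for-dp}. Every subproblem $I'=\langle n',k-1,\ell-j,\gamma-n+n'+\gain\gainplus_j\gainb,\delta'\rangle$ that contributes a term to the maximum defining $\fy(I)$ has $n'<n$, hence $\fy(I')\leq v_{n'+1}\leq v_n$. By the induction hypothesis, when $e_E(v_{n'})$ was processed---which is strictly before $e_E(v_n)$, since $v_{n'}<v_n$---the entry $X[I']$ was already the true value $\fy(I')$, so $e_X(I')$ was inserted with that correct $x$-value and has been extracted by the time $e_E(v_n)$ is reached. The event‑priority rules are used here: if $\fy(I')=v_n$ (which forces $n'=n-1$), then $e_X(I')$ and $e_E(v_n)$ share an $x$-value, but subproblem events outrank update events, so $e_X(I')$ is still handled first. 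When $e_X(I')$ is processed, the algorithm executes precisely the assignments $X(I)\leftarrow\max\big(X(I),\fspan(\fy(I'),n,\gain,\gainb)\big)$ over all $\gain,\gainb$ and all triples $(\delta',j,\delta)\in\Delta(\tau,\gain,\gainb)$, which by Lemma~\ref{lem:1d:sufficient} are exactly the terms of the maximum in Lemma~\ref{lem:basis-for-dp}. Hence once all contributing events have been processed $X[I]=\fy(I)$, and since every write to $X[I]$ has this form, $X[I]$ can never overshoot $\fy(I)$. This closes the induction.

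One auxiliary fact feeds into the above: whenever a subproblem event is processed at a position~$x_0$, the arrays $A$ and $B$ store $\fa(x_0)$ and $\fb(x_0)$, so the $\fspan$ values computed there via Observation~\ref{obs:span} are the correct ones. This is maintained by the $A$-map events (each replacing an expiring non‑vertical edge of a boundary polyline by its successor along the polyline, or $\NA$) and the $B$-map events (each deleting the bottommost intersection with the $B$-map and shifting the remaining ones down as the sweep passes a voter), together with the rule that all map events at a given $x$-value precede the subproblem events there. I expect this synchronization, and pinning down the exact accounting of which already‑processed subproblems $I'$ have been folded into each $X[I]$---namely, exactly those with $\fy(I')\leq x_0$, so that $X[I]$ always equals the truncated right‑hand side of Lemma~\ref{lem:basis-for-dp}---to be the only genuinely delicate point; everything else is a direct unrolling of Definition~\ref{def:subproblem} and Lemmas~\ref{lem:1d:sufficient} and~\ref{lem:basis-for-dp}.

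For the running time, there are $O(k^*\ell^*(n^*)^2)$ subproblems and hence that many subproblem events, and processing one costs $O(n^*)$: by Observation~\ref{obs:span} the $O(n^*)$ triples $(n,\gain,\gainb)$ with $\fspan(x_0,n,\gain,\gainb)\neq-\infty$, and their $y$-values, are enumerated in $O(n^*)$ time, and each triggers only $|\Delta(\tau,\gain,\gainb)|=O(1)$ constant‑time updates of~$X$; this gives $O(k^*\ell^*(n^*)^3)$ in total. The remaining contributions are of lower order: the gain map, and with it the $A$-map, has complexity $O((n^*)^2)$, so there are $O((n^*)^2)$ $A$-map events of $O(1)$ cost each; there are $O(n^*)$ $B$-map and update events, each doing an $O(n^*)$ shift of~$B$, and the update events insert altogether $O(k^*\ell^*(n^*)^2)$ subproblem events (summing $O(\min(n,k^*)\min(n,\ell^*)\,n)$ over $n\leq n^*$); and each of the $O(k^*\ell^*(n^*)^2)$ priority‑queue operations costs $O(\log n^*)$. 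Summing, the total is $O(k^*\ell^*(n^*)^3)+O(k^*\ell^*(n^*)^2\log n^*)+O((n^*)^2)=O(k^*\ell^*(n^*)^3)$, as claimed.
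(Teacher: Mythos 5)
Your proof is correct and follows essentially the same approach as the paper's own (very terse) argument: the paper observes that the event priority rules keep the $A$, $B$, and $X$ structures synchronized with the sweep and then states that correctness ``follows from the discussion and lemmas above,'' while you make this explicit via a strong induction on $n$ with the invariant that $X[I]=\fy(I)$ at the moment $e_E(v_n)$ is extracted, appealing to Lemmas~\ref{lem:1d:sufficient} and~\ref{lem:basis-for-dp} and Observation~\ref{obs:span} exactly as the paper does. The running-time accounting is likewise identical in substance, with your version giving a slightly more careful enumeration of the lower-order event costs.
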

\begin{proof}
We handle the $A$-map and $B$-map events before a subproblem event so that $A$ and $B$ data structures are up-to-date when we want to compute the $\fspan$ function on handling a subproblem event.
We also handle a subproblem event before an update event so that when we want to add a new subproblem event to the event queue on handling an update event, its entry in table $X$ has the correct value.
The correctness of the algorithm now follows from the discussion and lemmas above.

The running time is dominated by the handling of the subproblem events.
By Observation~\ref{obs:span}, the algorithm handles each subproblem in $O(n^*)$ time,
plus $O(\log n^*)$ for operations on the event queue,
and there are $O(k^*\ell^*(n^*)^2)$ subproblems. Hence,  the total
running time is~$O(k^*\ell^*(n^*)^3)$.
\end{proof}
By Lemmas~\ref{lem:computeGamma} and~\ref{lem:alg-correct}, the algorithm described
at the beginning of Section~\ref{sec:subproblem} computes $\Gamma_{k^*,\ell^*}(V)$ correctly.
Since this algorithm calls \ComputeSolutions $\lfloor n^*/{\ell^*} \rfloor$ times in Step~\ref{step1},
we obtain the following theorem.
\begin{theorem}
There exists an algorithm that computes $\Gamma_{k^*,\ell^*}(V)$, and thus solves
the one-dimensional case of the one-round discrete Voronoi game, in time~$O(k^*(n^*)^4)$.
\end{theorem}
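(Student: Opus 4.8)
The plan is to assemble the pieces built up in Sections~\ref{sec:1d}.1--2.6 rather than prove anything new. The algorithm is the three-step top-level procedure stated just before Definition~\ref{def:subproblem}: (1)~for every candidate threshold $\tau\in\{1,\ldots,\lfloor n^*/\ell^*\rfloor\}$ call $\ComputeSolutions(\tau,V,k^*,\ell^*)$ and read off the largest $\gamma$ for which $\fy(n^*,k^*,\ell^*,\gamma,\delta)=\infty$ for some $\delta\in\{\strict,\loose\}$; (2)~separately compute how many voters the trivial strategy of \PlP wins in $\langle V,k^*,\ell^*\rangle$; (3)~report the maximum over all these quantities. First I would settle correctness. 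If $k^*\geq\|V\|$ or $\ell^*\geq 2k^*$, then the trivial strategy is optimal (as argued in Section~\ref{sec:1d}.1), so step~(2) alone already yields $\Gamma_{k^*,\ell^*}(V)$. Otherwise $k^*<\|V\|$ and $\ell^*<2k^*$, which is exactly the hypothesis of Lemma~\ref{lem:computeGamma}: each run in step~(1) returns a value that is at most $\Gamma_{k^*,\ell^*}(V)$, and for the run where $\tau$ equals an optimal threshold $\tau_{\mathrm{opt}}$ the value is exactly $\Gamma_{k^*,\ell^*}(V)$ --- provided $\tau_{\mathrm{opt}}>0$. The leftover case $\tau_{\mathrm{opt}}=0$ is precisely the situation, observed earlier, in which \PlQ can win every voter not covered by a point of~$P$, so there the trivial strategy is again optimal and step~(2) covers it. Consequently the maximum in step~(3) equals $\Gamma_{k^*,\ell^*}(V)$; a realizing strategy for \PlP is recovered by the standard back-pointer bookkeeping in the dynamic program.

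Next I would bound the running time. The one-time preprocessing --- sorting the $n^*$ voters, building the gain map (complexity $O((n^*)^2)$ by the lemma above) and the initial event lists --- costs $O((n^*)^2\log n^*)$. By Lemma~\ref{lem:alg-correct}, each invocation of $\ComputeSolutions$ runs in $O(k^*\ell^*(n^*)^3)$ time, and step~(1) performs $\lfloor n^*/\ell^*\rfloor$ of them, so
\[
\Big\lfloor \tfrac{n^*}{\ell^*}\Big\rfloor \cdot O\!\big(k^*\ell^*(n^*)^3\big)=O\!\big(k^*(n^*)^4\big).
\]
The crucial point is that the factor $\ell^*$ in the per-invocation cost cancels against the $1/\ell^*$ coming from the number of thresholds; this cancellation is legitimate exactly because the $\ell^*$-th largest gain in any gain sequence is at most $n^*/\ell^*$, so it suffices to try $\tau\in\{1,\ldots,\lfloor n^*/\ell^*\rfloor\}$. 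Step~(2) and the preprocessing are both dominated by this, giving the claimed $O(k^*(n^*)^4)$ overall.

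I expect the main obstacle to be entirely behind us: it lives in Lemma~\ref{lem:computeGamma} (that the subproblem values encode $\Gamma_{k^*,\ell^*}(V)$) and Lemma~\ref{lem:alg-correct} (that the sweep fills the table $X$ correctly and within $O(k^*\ell^*(n^*)^3)$). The only genuinely delicate point remaining at this level is the boundary case analysis: one must verify that in every regime where the dynamic program is \emph{not} guaranteed correct --- the trivial-strategy regimes $k^*\geq\|V\|$ and $\ell^*\geq 2k^*$, and the degenerate optimal threshold $\tau_{\mathrm{opt}}=0$ --- the trivial-strategy computation of step~(2) is provably optimal, so that the maximum in step~(3) is never too small, while, since every quantity fed into that maximum is a lower bound on $\Gamma_{k^*,\ell^*}(V)$, it is never too large either. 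Once this bookkeeping is checked, the theorem reduces to the arithmetic above.
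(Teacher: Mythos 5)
Your proposal is correct and follows essentially the same route as the paper: combine Lemma~\ref{lem:computeGamma} (subproblem values encode $\Gamma_{k^*,\ell^*}(V)$ for the right threshold) with Lemma~\ref{lem:alg-correct} (each call to \ComputeSolutions is correct and runs in $O(k^*\ell^*(n^*)^3)$ time), then multiply by the $\lfloor n^*/\ell^*\rfloor$ candidate thresholds. The paper's own proof is a two-sentence remark making precisely this combination; your additional care about the $k^*\geq\|V\|$, $\ell^*\geq 2k^*$, and $\tau_{\mathrm{opt}}=0$ regimes being absorbed by the trivial-strategy step is a correct and welcome elaboration, not a different argument.
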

\paragraph{{\rm\emph{Remark.}}}
We can also solve the one-dimensional case of the one-round discrete Voronoi game when voters are weighted, i.e.
each voter $v\in V$ has an associated weight $\omega(v)$ and the players try to maximize the total weight of the voters they win.
In this case, the $\ngain$-gain and $\ngainb$-gain of an interval is defined as the total weight of voters the second player
can win in that interval by placing one point and two points, respectively. The number of possible thresholds is not an integer in
range $[0,n^*]$, but the sum of any sequence of consecutive voters define a threshold, which makes a total of $O((n^*)^2)$ different thresholds.
The gain map also becomes more complex and in the algorithm we need to spend $O((n^*)^2)$ time (instead of $O(n^*)$) to handle
each subproblem event, which results in an algorithm with running time $O(k^*\ell^*(n^*)^5)$.

\section{\texorpdfstring{$\Sigma_2^P$}{Sigma-2}-Hardness for \texorpdfstring{$d\geq 2$}{d>1}}
In this section we prove that the one-round discrete Voronoi game is $\Sigma_2^P$-hard
in $\Reals^2$, which implies hardness for $d>2$ as well.
To prove this, it suffices to show that deciding if \PlQ has a winning strategy against every possible strategy of \PlP is $\Pi_2^P$-hard.
Our proof will use a reduction from a special case of the quantified Boolean formula problem
({\sc qbf}), as defined next. Let $S \coloneqq \{s_1,\ldots,s_{n_s}\}$ and $T\coloneqq\{t_1,\ldots,t_{n_t}\}$
be two sets of variables, and let $\bar{S}\coloneqq\{\bar{s}_1,\ldots,\bar{s}_{n_s}\}$ and
$\bar{T}\coloneqq\{\bar{t}_1,\ldots,\bar{t}_{n_t}\}$ denote their negated counterparts.
We consider Boolean formulas~$B$ of the form
\[
B \ \ \coloneqq \ \ \forall s_1,\ldots,s_{n_s} \exists t_1,\ldots,t_{n_t}: c_1\land\cdots\land c_{n_c}
\]
where each clause $c_i$ in $C\coloneqq\{c_1,\ldots,c_{n_c}\}$ is a disjunctive combination of at most three literals from $S\cup \bar{S} \cup T \cup \bar{T}$.
Deciding if a formula of this form is true is a $\Pi_2^P$-complete problem~\cite{stockmeyer1976polynomial}.

Consider the undirected graph $G_B\coloneqq (N,A)$ representing $B$, where $N\coloneqq S \cup T \cup  C$ is the set of nodes of $G_B$ and $A \coloneqq \{(c_i,s_j):s_j\in c_i \lor \bar{s}_j \in c_i\} \cup \{(c_i,t_j):t_j\in c_i \lor \bar{t}_j \in c_i\}$
is the set of edges of $G_B$.
Lichtenstein~\cite{lichtenstein1982planar} showed how to transform an instance of {\sc qbf} in polynomial time 
to an equivalent one whose corresponding graph is planar (and of quadratic size). We can use the same technique here. Hence, we may start
our reduction from an Boolean formula $B$ such that $G_B$ is planar.
We call the resulting problem \PlanarQuantCNF.

In the following, we transform an instance of \PlanarQuantCNF
to an instance $\langle V,k,\ell\rangle$ of the Voronoi game problem in the plane
such that $B$ is true if and only if \PlQ has a winning strategy.
\medskip

\begin{figure}
  \centering
  \includegraphics{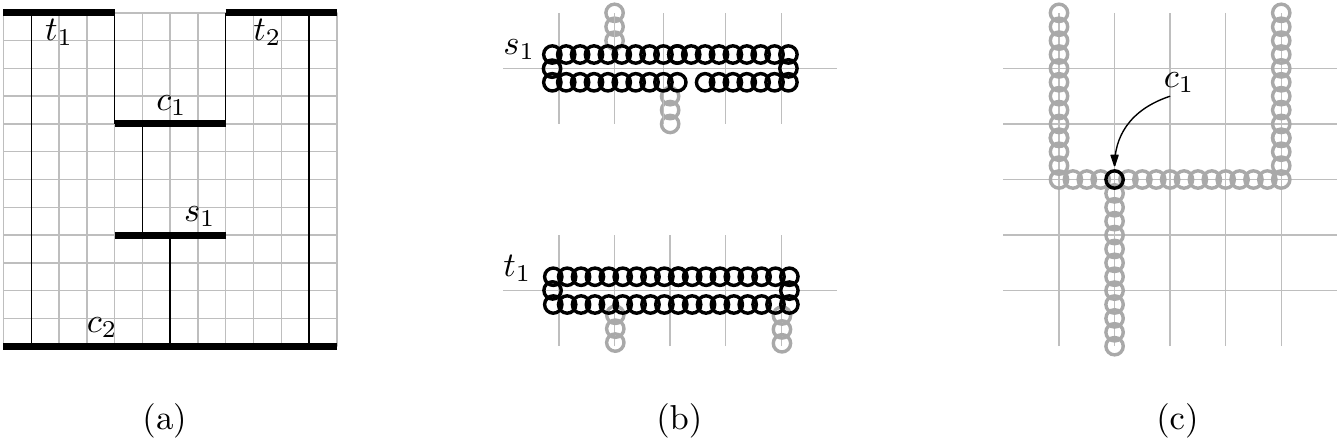}\\
  \caption{{\bf a)} Bar graph of the boolean expression $\forall s_1 \,\exists t_1,t_2: c_1 \land c_2$, where $c_1\coloneqq \bar{s_1} \lor t_1 \lor t_2$ and $c_2 \coloneqq  s_1 \lor t_1 \lor \bar{t_2}$.
  {\bf b)} Representation of variables in the transformed graph. {\bf c)} Representation of a clause in the transformed graph.}
  \label{fig:newgraphongrid}
\end{figure}

The first (and standard) step in our reduction is to construct a specific embedding of the
planar graph~$G_B$. More precisely, we use a \emph{bar graph}, where each node is represented by
a horizontal segment and each edge is represented by a vertical segment; see Figure~\ref{fig:newgraphongrid}a.
Rosenstiehl and Tarjan~\cite{rosenstiehl1986rectilinear} showed that such a representation
always exists.
Before we describe the specific variable, clause, and edge gadgets that we use,
it is useful to make the following observation about when \PlQ wins a certain
voter~$v_i\in V$, when the strategy~$P$ of \PlP is fixed.
Define $D_i$, the \emph{disk of $v_i$} with respect to the given set~$P$, as
the disk with center $v_i$ and radius $\distop(v_i,P)$, that is,
$D_i$ is the largest disk centered at~$v_i$ that has no point
from $P$ in its interior.
The key to our reduction is the following simple observation.
\begin{observation}
Player \PlQ wins a voter $v_i\in V$ against $P$ if and only if she places a point $q$ in the interior of $D_i$.
\end{observation}


Next we describe how to transform the bar graph $G_B$ into a voter set $V$.
Because of the above observation, it will be convenient to imagine that each voter
is surrounded by a disk, and talk about placing disks. Thus, whenever we say we place a disk somewhere, in our construction we actually place a voter at the center of the disk. Later we will then put more voters
that will force most of \PlP's strategy, so that these disks become meaningful.
The nodes of $G_B$ are replaced by the following gadgets:
\begin{description}
\item[Clause gadgets.] Each node $c_i\in C$ is transformed to a single disk as shown in Figure~\ref{fig:newgraphongrid}c.
     Recall that each such node is a horizontal segment, which has three incoming edges from its constituent literals.
    The position of the voter for $c_i$ is (roughly) the point where the middle edge arrives at the
    segment.
\item[Variable gadgets.] Each node $t_i$ of degree $\vdeg(t_i)$ in graph $G_B$ is transformed to a ring of an even number of disks numbered in counterclockwise order starting from any arbitrary disk,
which we call a \emph{closed necklace}, containing at least $2\vdeg(t_i)$ disks as shown in Figure~\ref{fig:newgraphongrid}b.

Each node $s_i$ of degree $\vdeg(s_i)$ in graph $G_B$ is transformed to a ring of disks with one disk missing numbered in counterclockwise order starting from the disk after the missing disk,
which we call an \emph{open necklace}, as shown in Figure~\ref{fig:newgraphongrid}b.
An open necklace has odd size and at least $2\vdeg(s_i)+1$ disks.
We can assume the distance between the two disks in the place where the necklace is open is exactly 1.
Let $D_1$ and $D_2$ be the two disks at distance~1, and let $z_1\in \partial D_1$ and $z_2\in \partial D_2$ be the closest pair of points on the boundaries of these two disks. (Thus $\distop(z_1,z_2)=1$.) Then we place a cluster of $w$ voters at $z_1$ and we place another cluster of $w$ voters at $z_2$,
where $w$ is a suitable number specified later and a \emph{cluster} of voters is simply a number of coinciding voters.

\item[Edge gadgets.]
Each edge $\{c_i,s_j\}$ or $\{c_i,t_j\}$ is replaced by a chain of disks of even length that in one end intersects a pair of consecutive disks in a necklace corresponding to the node $s_j$ or $t_j$, respectively,
such that the first disk has an odd position and the next disk in clockwise order has an even position, and in the other end intersects the disk associated with the clause $c_i$

Similarly, each edge $\{c_i,\bar{s}_j\}$ or $\{c_i,\bar{t}_j\}$ is replaced by a chain of disks  of even length that in one end intersects a pair of consecutive disks in a necklace corresponding to the node $s_j$ or $t_j$, respectively,
such that the first disk has an even position and the next disk in clockwise order has an odd position, and in the other end intersects the disk associated with the clause $c_i$.
\end{description}
In our transformation of $G_B$ to an instance of the Voronoi game, we will use that
$V$ can be a multiset, where we will use multiplicities 1, $w$, $w+1$, $\lfloor w/2\rfloor+1$,
$\lceil w/2\rceil+1$, and $W$ for the voters,
for suitably chosen values $w$ and $W$. We call a cluster of $i$ voters an
\emph{$i$-cluster}.
We denote the multisets containing clusters of 1,$w$, and $W$ voters by $V_1$, $V_w$, and $V_W$, respectively.
The values $w$ and $W$ will be chosen such that they satisfy
\begin{align}
w &= |V_1|+1, \\
W &= |V_w|+|V_1|+1.
\end{align}


\begin{figure}
  \centering
  \includegraphics{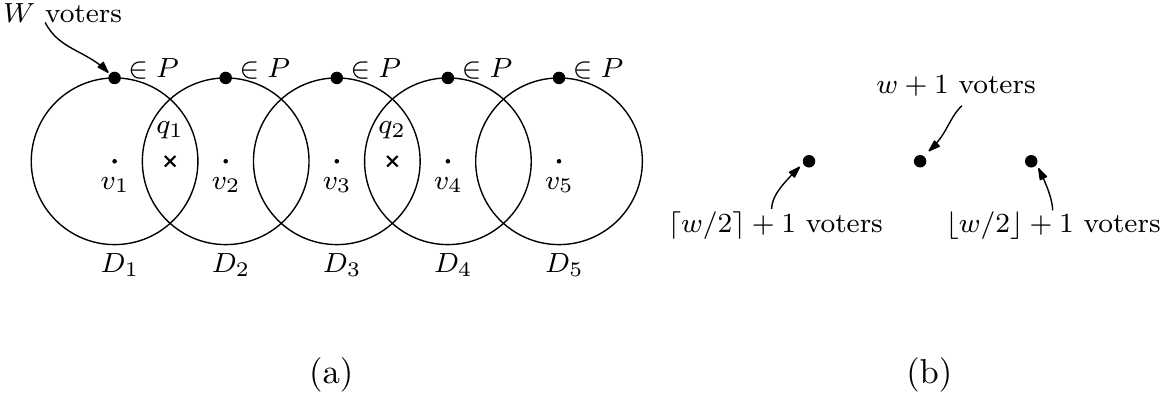}\\
  \caption{{\bf a)} When all the heavy-weight clusters of $W$ voters are chosen by \PlP, the best strategy of \PlQ to win the remaining single voters is to put his points in every other intersection of the disks of the voters.
  {\bf b)} An example of a balancing gadget.}
  \label{fig:newdisksandclusters}
\end{figure}

We have already described the placement of voters in $V_1$ by describing the placement of the disks in defining the gadgets. In our construction, for each voter $v\in V_1$, we consider an imaginary disk $z(v,r)$ which is centered in the position of the voter $v$ and has a radius $r$, which is a real number in range $10\leq r \leq 20$.
Each disk has a $W$-cluster inside it near its boundary which is denoted by $W(v)$.
Two disks $z_1(v_1,r_1)$ and $z_2(v_2,r_2)$ either do not intersect or their interior have a non-empty intersection (see Figure~\ref{fig:newdisksandclusters}a).
Similarly when three disks intersect, their interior have a non-empty intersection. We also described the placement of voters in $V_w$ when we defined the variable gadgets for the variables~$s_i$.

We call this construction \emph{the transformed graph}. Now we add $n'\coloneqq |V_W|-|V_1|+1$ gadgets, which we call \emph{balancing gadgets}, each comprised of three clusters of sizes $\lfloor w/2 \rfloor$+1, $\lceil w/2 \rceil+1$, and $w+1$ far away from the graph of disks and from each other as shown in Figure~\ref{fig:newdisksandclusters}b. Let $V'$ be the multiset of all these voters.

We define $V\coloneqq V_1 \cup V_w \cup V_W \cup V'$, $k\coloneqq \|V_W\|+n'+n_s$, $\ell \coloneqq (|V_1|-n_c-n_s)/2 + n_s + 2n'$, where $\| V_W\|$ denotes the number of distinct points in $V_W$.

\begin{lemma}
\label{lem:hardness}
\PlQ has a winning strategy in $\langle V,k,\ell\rangle$ if and only if $B$ is true.
\end{lemma}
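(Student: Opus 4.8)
\textbf{Proof plan for Lemma~\ref{lem:hardness}.}
The plan is to establish the two directions of the equivalence by first understanding the ``forced'' part of any near-optimal strategy of \PlP, and then analyzing the residual game on the disks. The balancing gadgets and the cluster-size arithmetic are designed precisely so that \PlP is forced into a canonical behavior; I would make this rigorous first. Concretely, I would argue: (i) Since $W = |V_w|+|V_1|+1 > |V_w|+|V_1|$, any single $W$-cluster is worth more than \emph{all} of $V_1\cup V_w$ combined, so \PlP must cover every distinct location carrying a $W$-cluster --- this uses up exactly $\|V_W\|$ of his $k$ points. (ii) In each balancing gadget the three clusters have sizes $\lfloor w/2\rfloor+1$, $\lceil w/2\rceil+1$, $w+1$; because they are mutually far apart and far from the graph, \PlP can protect at most one of them per point, and \PlQ with two points beats any two of them, so \PlP is forced to spend one point on the $(w+1)$-cluster of each of the $n'$ balancing gadgets and \PlQ is forced to spend $2n'$ points neutralizing the remaining two clusters of each. (The identity $n' = |V_W|-|V_1|+1$ and the choice of $\ell$ are what make this budget tight.) (iii) The remaining $n_s$ points of \PlP must go on the $w$-clusters associated with the $s_i$ gadgets: each open necklace has two coincident $w$-clusters at the ``gap'' at mutual distance $1$, and since $w = |V_1|+1$ a $w$-cluster outweighs all of $V_1$, so \PlP must cover one of the two $w$-clusters in each $s_i$-gadget, and which one he picks will encode the truth value of $s_i$ (covering $z_1$ vs.\ $z_2$ shifts the parity of which disks in that necklace remain ``open'' on one side versus the other).

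Having pinned down that \PlP's strategy consists of (a) all $W$-clusters, (b) one $(w+1)$-cluster per balancing gadget, and (c) one of the two $w$-clusters per universally quantified variable --- with (c) being the only genuine choice, and it ranges over all $2^{n_s}$ truth assignments to $S$ --- the game reduces to the following: given \PlP's choice (which fixes a $\pm$ assignment to $S$), can \PlQ, using her remaining $\ell - 2n' - n_s = (|V_1|-n_c-n_s)/2 + n_s - n_s = (|V_1|-n_c-n_s)/2$ \dots wait, recount: \PlQ spends $2n'$ on balancing gadgets, leaving $(|V_1|-n_c-n_s)/2 + n_s$ points for the disk graph. By the key Observation, \PlQ wins a single voter $v_i\in V_1$ iff she puts a point inside the disk $D_i$. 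Each disk $D_i$ overlaps only its necklace-neighbors (and, for clause disks, the three incoming chain-ends), by the geometric design (radii in $[10,20]$, spacing $1$, and the stated ``two disks either disjoint or have nonempty interior intersection'' property). So \PlQ's task is exactly a covering problem on a cycle/chain structure: one point at an intersection of two consecutive disks covers both. The second half of the proof is to show this covering problem is solvable with the given budget iff $B$ is true. The intended correspondence is: along each necklace, \PlQ places points in ``every other'' disk--disk intersection (Figure~\ref{fig:newdisksandclusters}a); an even (closed) necklace of $2m$ disks can always be perfectly covered with $m$ points in exactly two ways (the two perfect matchings of the even cycle), and \PlQ's choice of which matching corresponds to the truth value of $t_i$. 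An open (odd) necklace, because \PlP has already covered one of its two $w$-clusters, behaves like a path whose parity is set by \PlP's choice --- this is how the universal assignment propagates in. A clause disk $c_i$ is covered for free iff at least one of its three incoming chains ``delivers'' a \PlQ-point at the chain--clause intersection, which happens iff the literal feeding that chain is satisfied under the combined assignment; the even length of every chain is what makes the parity bookkeeping consistent between the variable end and the clause end.

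For the formal write-up I would: first prove the ``forced moves'' claim via the three weight inequalities above (this is a finite case check on who-beats-whom among clusters, essentially arithmetic on the multiplicities $1, w, w+1, \lfloor w/2\rfloor+1, \lceil w/2\rceil+1, W$); then, assuming $B$ true, exhibit \PlQ's winning response to every \PlP strategy --- given \PlP's assignment $\alpha$ to $S$, pick a satisfying assignment $\beta$ to $T$ (exists since $B$ is true for \emph{all} $\alpha$), orient each closed necklace according to $\beta(t_i)$, verify each chain carries a \PlQ-point to its clause end iff its literal is true, conclude every clause disk is covered, count points and check it equals the budget $(|V_1|-n_c-n_s)/2 + n_s + 2n'$ exactly, hence \PlQ wins all of $V_1$ (and already neutralized the balancing gadgets), so \PlP wins fewer than $n/2$; then, assuming $B$ false, fix the universal assignment $\alpha$ witnessing falsity, let \PlP play the corresponding forced strategy, and argue that \PlQ cannot win all single voters: any \PlQ placement induces a (partial) assignment $\beta$ to $T$, under $(\alpha,\beta)$ some clause $c_j$ is unsatisfied, its disk $D_j$ receives no \PlQ-point, and a short counting argument (she has no spare points to cover $D_j$ directly because the necklace/chain covering is already tight) shows $v_j$ is lost, so \PlP wins at least $n/2$.

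The main obstacle I anticipate is the tight quantitative bookkeeping in the ``only if'' direction: showing that when $B$ is false, \PlQ genuinely \emph{cannot} patch the one uncovered clause disk by reallocating points --- i.e.\ that the covering problem on the disk graph has no slack. This requires a clean combinatorial lemma, presumably that the minimum number of \PlQ-points needed to cover all of $V_1$ equals exactly $(|V_1|-n_c-n_s)/2 + n_s$ when the induced formula-instance is satisfiable and strictly more otherwise; proving the ``strictly more'' half means showing that the parity constraints forced by \PlP's $w$-cluster choices on the open necklaces, together with the even-chain parity relations, are exactly the clauses of $B$, so an unsatisfiable system forces at least one extra point. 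Getting the off-by-one constants in $w$, $W$, $n'$, $k$, and $\ell$ to line up so that this slack is precisely zero versus one is where the real care is needed; everything else (the geometry of which disks overlap, the existence of the bar-graph embedding, the key Observation) is either routine or already cited.
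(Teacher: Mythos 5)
Your plan follows the paper's proof structure essentially step for step: force \PlP onto the $W$-clusters, $(w+1)$-clusters, and one $w$-cluster per open necklace via the cluster-weight inequalities; force \PlQ to spend $2n'$ points on the balancing gadgets; reduce the residual to a covering problem on the necklace/chain/clause disk graph where \PlP's $w$-cluster choice encodes $\alpha\in\{0,1\}^{S}$ and \PlQ's parity choice on each closed necklace encodes $\beta\in\{0,1\}^{T}$; and close with the tight one-voter-slack count. The paper carries out the ``only if'' direction (your anticipated obstacle) by exactly the informal counting argument you sketch rather than by isolating a standalone minimum-cover lemma, and it is also explicit that the $n_s$ residual points of \PlQ near the uncovered $w$-clusters pick up the $n_s w$ voters of $V_w$ in addition to their $V_1$ neighbors --- a detail your ``win all of $V_1$'' phrasing glosses over --- but these are presentational, not substantive, differences.
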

\begin{proof}

First we show that in an optimal strategy, \PlP places his points on the most valuable positions, i.e. $\|V_W\|$ of his points on $W$-clusters, $n'$ of his points on $(w+1)$-clusters, and the remaining $n_s$ points on $w$-clusters where exactly one $w$-cluster is selected from each open necklace. Note that with such a strategy for \PlP, the number of remaining voters for \PlQ is $n_sw+ (w+2)n'+|V_1|$.

For a contradiction, assume \PlP does not select some of the $W$-clusters. In that case \PlP can either move some of his points from balancing gadgets in which he has two or more points or some of his points from the transformed graph which are not exactly on a $W$-cluster to the uncovered $W$-clusters and get a better gain; because each extra point in a balancing gadget has a gain of at most $2w+3<W$ and all the voters of the transformed graph in $V_1$ and $V_w$ together has a gain of at most $|V_w|+|V_1| < W$. Therefore, all the $W$-clusters will be covered by \PlP. Moreover, \PlP must also select all the $(w+1)$ clusters; because by doing this, all the points of \PlQ have a gain of at most $2w+2$ which is less that the gain of an unguarded balancing gadget, i.e. $2w+3$.

It is obvious that \PlP does not place more than one point in each balancing gadget as the gain of each extra point in a balancing gadget is at most $\lceil w/2 \rceil +1$, but the guaranteed gain of a point on a $w$-cluster is $w$. Now, we just need to show that \PlP places all his remaining $n_s$ points on $w$-clusters. This is also easy to see as if more that $n_s$ $w$-clusters are uncovered, \PlQ can easily win $(n_s+1)w+ (w+2)n'$ voters which is more than the total number of remaining voters  $n_sw+ (w+2)n'+|V_1|$ for \PlQ in the case where at most $n_s$ $w$-clusters are uncovered by \PlP.

It is also easy to see that in an optimal strategy, \PlP selects exactly one $w$-cluster in each open necklace. As otherwise, at least one pair of $w$-clusters remains unselected and \PlQ can gain $2w$ voters using just one point, while if exactly one cluster from each pair of $w$-clusters is selected by \PlP, the maximum possible gain of each point of \PlQ is at most $w+1$ voters.

Now, assuming \PlP plays an optimal strategy, the gain of each point of \PlQ is at most
\begin{itemize}
\item $w+1$ voters for a total of $n_s$ points if he puts his point close to an uncovered $w$-cluster,
\item $\lceil w/2 \rceil+1$ or $\lfloor w/2 \rfloor+1$ voters for a total of $2n'$ points if he puts his points in balancing clusters, and
\item $2$ or $3$ voters if he puts his points in the intersection of disks in the transformed graph.
\end{itemize}
This shows that in her optimal strategy \PlQ places all her $2n'$ points in the balancing clusters to win all the remaining voters there, and the remaining $(|V_1|-n_c-n_s)/2 + n_s$ points of \PlQ will be placed on the transformed graph.

\begin{figure}
  \centering
  \includegraphics{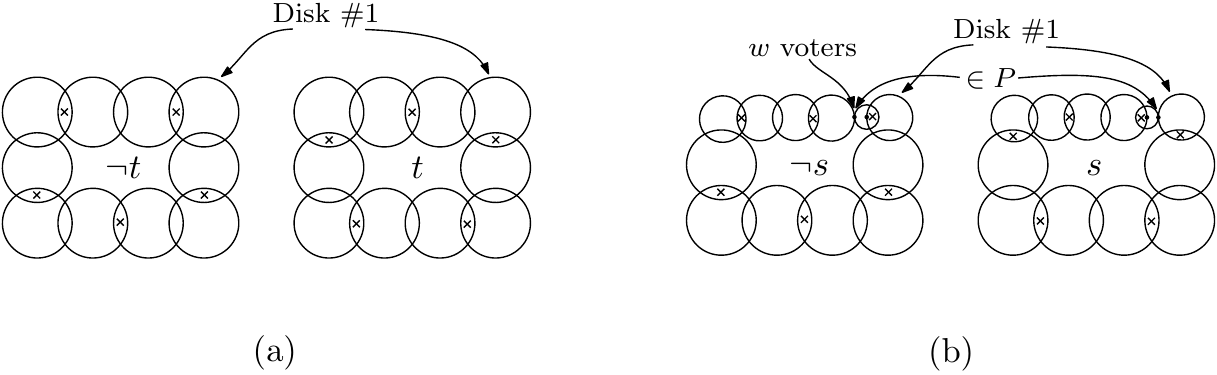}\\
  \caption{{\bf a)} The two different ways of winning all voters for \PlQ. {\bf b)} \PlP can force \PlQ's game.}
  \label{fig:newtruefalse}
\end{figure}

To win the remaining voters of a necklace efficiently, \PlQ has two choices. She can either put her points in odd intersections or in even intersections of the necklace as shown in Figure~\ref{fig:newtruefalse}a. In both cases the number points she spends is equal to half the number of disks in the necklace. To win the remaining voters of an open necklace, \PlQ has just one choice. We note that the $w$-cluster selected by \PlP effectively changes the open necklace to a closed necklace with one more disk (see Figure~\ref{fig:newtruefalse}b). Depending on the $w$-cluster chosen by \PlP, player \PlQ must put her points in odd or even intersections as shown in Figure~\ref{fig:newtruefalse}b. In either case \PlQ spends $(m+1)/2$  points, where $m$ is the number of disks in the open necklace. To win all the remaining voters in a chain of $m$ disks, \PlQ should place some points in every other intersection of the chain in odd positions and he spends $m/2$ points. Alternatively, she can place her $m/2$ points in even positions and also win the point in the clause on the one end of the chain only if the other end of the chain has already been covered by one of his points in the (open) necklace.

The placement of points by \PlQ in odd intersections of a necklace corresponds to a true value for the associated $t_j$, and placement in even intersections corresponds to a false value for $t_j$. Similarly, selection of the starting $w$-cluster of an open necklace by \PlP corresponds to a true value for the associated $s_j$, and selection of the ending $w$-cluster corresponds to a false value for $s_j$. When \PlQ has a choice to put his points in even intersections of a chain and win the voter of the associated clause at the end of the chain, it corresponds to a true value for the associated literal of the (open) necklace at the other end of the chain which gives the clause a true value. Intuitively, it is clear that given an optimal strategy $P$, \PlQ should use all his points to win the remaining voters and he can win the voter associated with each clause if and only if she can satisfy that clause.

More formally, given an optimal strategy $P$ of \PlP, we set $s_j$ to false if the starting $w$-cluster in clockwise order in the open necklace of the gadget for $s_j$ is selected in $P$ and set it to true otherwise; we call this set of values $\fs(P)$. By using the discussion in the two previous paragraphs, we can easily verify that if there exists an assignment $\ft$ for the variables $t_j (j=1,\dots,n_t)$ such that $\langle \fs(P), \ft\rangle$ satisfies $c_1\land\cdots\land c_{n_c}$, then \PlQ can win all the $n_sw+ (w+2)n'+|V_1|$ remaining voters.
Conversely, it is not hard to see that if \PlQ can win all the $n_sw+ (w+2)n'+|V_1|$ remaining voters, then there exists an assignment $\ft$ for $t_j$s such that $B$ is true.

Considering the fact that the number of remaining voters is exactly one more than the number of voters directly won by $P$, i.e.
$$\left(n_sw+ (w+2)n'+|V_1|\right)-\left(\|V_W\|+n_sw+ (w+1)n'\right) = 1,$$
\PlQ has a winning strategy in $\langle V,k,\ell\rangle$ if and only if $B$ is true.
\end{proof}

\label{sec:poly-reduction}
Lemma~\ref{lem:hardness} is not enough to show that the Voronoi game is $\Sigma_2^P$-hard. We need to show that the reduction can be done in polynomial time and therefore the resulting Voronoi game has polynomial size.


We can easily generate balancing gadgets. Therefore we focus on drawing the transformed graph. As stated earlier, in order to define the exact position of clusters in the transformed graph we use a method devised by Rosenstiehl and Tarjan in \cite{rosenstiehl1986rectilinear} to draw a planar graph on a grid of size $O(n)\times O(n)$ where each node is represented by a horizontal line segment and each edge is represented by a vertical line segment (see Figure~\ref{fig:newgraphongrid}a). As the graph $G_B$ is planar, we can draw it using this method. An (open) necklace associated with a variable $t_j$ or $s_j$ can be easily drawn as shown in Figure~\ref{fig:newgraphongrid}b. With a big enough cell-size for the grid, say 1000, we can adjust the disk sizes to get the desired properties for the intersections and also for the distance 1 between two ending disks in an open necklace. As a node associated with a clause has a degree of at most three, it can be drawn as in Figure~\ref{fig:newgraphongrid}c, and similarly with a big enough cell-size for the grid, we can adjust the size of the disks to get an even number of disks for each chain. The $W$-clusters can be placed at a distance of at most $0.1$ at a rational point near the border of each disk, and placing $w$-clusters is also trivial.

As the total number of nodes in $G_B$ is $O(n_s+n_t+n_c)$, we need a grid of size $O(n_s+n_t+n_c) \,\cdot\, O(n_s+n_t+n_c)$, and in worst case we have a constant number of disks on each edge of this grid. Therefore, the number of disk of $|V_1|$ is bounded from above by the size of the grid, i.e. $|V_1|\in O((n_s+n_t+n_c)^2)$, which means $|V|\in O(|V_1|W+n_sw+|V_1|+(2w+3)n') = O((n_s+n_t+n_c)^5)$. The following theorem is the result of this discussion.
\begin{theorem}
The Voronoi game problem is $\Sigma_2^P$-hard for $d>1$.
\end{theorem}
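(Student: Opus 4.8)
The plan is to combine Lemma~\ref{lem:hardness} with the polynomial-time constructibility of the reduction and the $\Pi_2^P$-completeness of \PlanarQuantCNF. Recall that $\forall\exists$-{\sc qbf} is $\Pi_2^P$-complete and that Lichtenstein's planarization turns any such instance into an equivalent one with planar incidence graph at only quadratic cost, so \PlanarQuantCNF is itself $\Pi_2^P$-complete. The map $B \mapsto \langle V,k,\ell\rangle$ described above is a many-one reduction from \PlanarQuantCNF to the problem ``does \PlQ have a winning strategy in $\langle V,k,\ell\rangle$'', and Lemma~\ref{lem:hardness} says it is correct: \PlQ wins iff $B$ is true. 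Since $\Gamma_{k,\ell}(V)=\max_P\min_Q|V[P\succeq Q]|$, player \PlP has a winning strategy iff $\Gamma_{k,\ell}(V)\geq n/2$, which is exactly the negation of ``$\forall P\,\exists Q:\,|V[P\succeq Q]|<n/2$'', i.e. the negation of ``\PlQ has a winning strategy''. Thus, once the reduction is shown to run in polynomial time, the \PlQ-winning problem is $\Pi_2^P$-hard and its complement --- deciding whether \PlP has a winning strategy, which is the decision version of the one-round discrete Voronoi game --- is $\Sigma_2^P$-hard.

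The bulk of the remaining work is checking polynomiality, which is exactly the content of the paragraph preceding the theorem and which I would spell out as follows. Using the Rosenstiehl--Tarjan bar-graph embedding, draw $G_B$ on an $O(n)\times O(n)$ grid with $n:=n_s+n_t+n_c$, each node a horizontal segment and each edge a vertical segment; with a large but constant cell size there is room inside every cell to realize the variable, clause, and edge gadgets using disks of radius in $[10,20]$ and rational centers of $O(\log n)$ bits, while simultaneously achieving the distance-$1$ gap at each open necklace, even chain lengths for every edge gadget, the prescribed pairwise and triple intersection patterns of the disks, and the parity offsets at the necklace--chain junctions that encode literal polarities. One then bounds $|V_1| = O(n^2)$ (a constant number of $V_1$-voters per grid edge), notes that $w=|V_1|+1$, $W=|V_w|+|V_1|+1$, and $n'=|V_W|-|V_1|+1$ are polynomially bounded, and concludes $|V| = O(|V_1|W + n_s w + |V_1| + (2w+3)n') = O(n^5)$, with $k=\|V_W\|+n'+n_s$ and $\ell=(|V_1|-n_c-n_s)/2+n_s+2n'$ likewise polynomial and computable in polynomial time. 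Finally, to pass from $d=2$ to $d>2$ one embeds the whole construction inside a $2$-flat $F$ of $\Reals^d$: replacing any facility of \PlP by its orthogonal projection onto $F$ only decreases its distances to the voters, which all lie in $F$, and hence cannot decrease \PlP's payoff, and symmetrically replacing any facility of \PlQ by its projection cannot decrease \PlQ's payoff; so both players may be assumed to play within $F$ and the game coincides with the planar one, whence $\Gamma_{k,\ell}(V)$ is unchanged.

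The step I expect to be the main obstacle is not the counting but the geometry hidden in Lemma~\ref{lem:hardness}: verifying that a single drawing simultaneously satisfies all the intersection and parity constraints of the gadgets --- in particular that the disk radii in $[10,20]$, the exact unit gap between open-necklace ends, the forced parities of the chain lengths, and the non-empty pairwise and triple intersections of overlapping disks can all be met on the fixed grid with small rational coordinates, and that this remains possible as node degrees and chain lengths vary. Once such a realization is exhibited, the theorem follows by chaining the reduction with the $\Pi_2^P$-completeness of \PlanarQuantCNF and passing to complements, in every dimension $d>1$.
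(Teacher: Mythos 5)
Your proposal is correct and follows the paper's own route: reduce from \PlanarQuantCNF (which is $\Pi_2^P$-complete), invoke Lemma~\ref{lem:hardness} for correctness, verify polynomiality via the Rosenstiehl--Tarjan bar-graph embedding with the bound $|V|=O((n_s+n_t+n_c)^5)$, and then observe that deciding whether \PlP wins is the complement of the $\Pi_2^P$-hard \PlQ-winning problem. Your explicit orthogonal-projection argument for lifting the $\Reals^2$ construction to arbitrary $d>1$ is a worthwhile elaboration of a point the paper asserts without proof.
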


We note that the final construction is rectilinear and therefore this reduction also works for all the $L_p$ norms as disks in the final construction can be replaced by $L_p$-disks and all the required properties still hold. Moreover, as this reduction is done for the two-dimensional case, this hardness result is also true for the $L_\infty$ norm, because in $\Reals^2$, the disks of the $L_\infty$ norm are similar to those in the $L_1$ norm, just rotated by $\pi/4$.

\section{Containment in \texorpdfstring{$\exists\forall \Reals$}{Sigma-2-Real} and the algorithm for \texorpdfstring{$d\geq 2$}{d>1}}
We now consider the one-round discrete Voronoi game in the $L_p$-norm,
for some arbitrary~$p$. Then a strategy $P=\{p_1,\ldots,p_k\}$ can win a voter $v\in V$
against a strategy $Q=\{q_1,\ldots,q_\ell\}$ if and only if the following
Boolean expression is satisfied:
$$
\win(v)\coloneqq\bigvee_{i\in[k]} \bigwedge_{j\in[\ell]} (\distop_p(p_i,v))^p \leq (\distop_p(q_j,v))^p, \\
$$
where $\distop_p$ is the $L_p$-distance.
This expression has $k\ell$ polynomial inequalities of degree $p$. The strategy~$P$ is winning
if and only if the majority of the expressions $\win(v_1),\ldots, \win(v_n)$ are true.
Having a majority function $\maj$ that evaluates to true if at least half of its parameters evaluates to true, player~\PlP has a winning strategy if and only if
\begin{multline*}
\exists x_1(p_1), \ldots, x_d(p_1),\ldots,x_1(p_k), \ldots, x_d(p_k)\\ \forall x_1(q_1), \ldots, x_d(q_1),\ldots,x_1(q_\ell), \ldots, x_d(q_\ell): \quad \maj(\win(v_1),\ldots,\win(v_n)) \\
\end{multline*}
is true, where $x_i(\cdot)$ denotes the $i$-th coordinate of a point.

Ajtai~\etal \cite{ajtai1983sorting} show that it is possible to construct a sorting network, often called the AKS sorting network, composed of comparison units configured in $c\cdot \log n$ levels, where $c$ is a constant and each level contains exactly $\lfloor n/2\rfloor$ comparison units. Each comparison unit takes two numbers as input and outputs its input numbers in sorted order. Each output of a comparison unit (except those on the last level) feeds into exactly one input of a comparison unit in the next level, and the input numbers are fed to the inputs of the comparison units in the first level. The outputs of the comparison units in the last level (i.e., the outputs of the network) give the numbers in sorted order.

Using AKS sorting networks we can construct a Boolean formula of size $O(n^c)$ for some constant $c$ that tests if the majority of its $n$ inputs are true as follows. Assuming the boolean value \emph{false} is smaller than the boolean value \emph{true} value, we make an AKS sorting network that sorts $n$ boolean values. This is possible using comparison units that get $p$ and $q$ as input, and output $p \land q$ and $p \lor q$. It is not hard to verify that the $\lceil n/2 \rceil$-th output of the network is equal to the value of the majority function on the input boolean values. By construction, we can write the Boolean formula representing the value of this output as \emph{logical and} ($\land$) and \emph{logical or} ($\lor$) combination of the input boolean values, and the size of the resulting formula is $O(n^c)$.

Thus we can write $\maj(\win(v_1),\ldots,\win(v_n))$
as a Boolean combination of $O(n^ck\ell)$ polynomial inequalities of degree~$p$,
where each quantified block has $kd$ and $\ell d$ variables respectively.
Basu~\etal~\cite{basu1996combinatorial} gave an efficient algorithm
for deciding the truth of quantified formulas. For our formula
this gives an algorithm with $O((n^ck\ell)^{(kd+1)(\ell d +1)}p^{k\ell d^2})$
running time to decide if \PlP has a winning strategy for a
given instance~$\langle V,k,\ell\rangle$ of the Voronoi game problem.
Note that this is polynomial when $k$, $\ell$ and $d$ are constants.

%
For the $L_{\infty}$ norm, we can define $F(v)$ as follows:
$$
F(v)\coloneqq\bigvee_{i\in[k]} \bigwedge_{j\in[\ell]} \bigvee_{s'\in[d]} \bigwedge_{s\in[d]} |x_s(p_i)-x_s(v)| \leq |x_{s'}(q_j) - x_{s'}(v)|,
$$
By comparing the squared values instead of the absolute values, we have a formula which demonstrates that even with the $L_{\infty}$ norm, the problem is contained in $\exists\forall \Reals$ and there exists an algorithm of complexity $O((n^ck\ell d^2)^{(kd+1)(\ell d +1)}2^{k\ell d^2})$ to solve it.

\begin{theorem}
The one-round discrete Voronoi game $\langle V, k, \ell \rangle$ in $\Reals^d$ with the $L_p$ norm is contained in $\exists\forall \Reals$. Moreover, for fixed $k$, $\ell$, $d$ there exists an algorithm that solves it in polynomial time.
\end{theorem}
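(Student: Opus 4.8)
The plan is to express ``player~\PlP\ has a winning strategy'' as a two-block first-order sentence over $\Reals$ whose quantifier-free matrix is a Boolean combination of polynomially many polynomial inequalities of bounded degree, and then (i) observe that such a sentence is, by definition, a certificate of membership in $\exists\forall\Reals$, and (ii) feed it to the quantifier-elimination algorithm of Basu~\etal~\cite{basu1996combinatorial}. The only non-routine point is keeping the matrix of polynomial size, which forces us to replace the naive formula for the majority function.

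First I would fix a voter $v$ and encode the event ``$P$ wins $v$ against $Q$'' by the Boolean formula $\win(v)$ stated above: a disjunction over $i\in[k]$ of a conjunction over $j\in[\ell]$ of the single inequality $(\distop_p(p_i,v))^p\le(\distop_p(q_j,v))^p$, which, expanding the $L_p$-distance, is one polynomial inequality of degree $p$ whose only constants are the (rational) input coordinates of $v$. So $\win(v)$ uses $k\ell$ polynomial inequalities, and the $n$ formulas $\win(v_1),\dots,\win(v_n)$ together use $O(nk\ell)$ of them. Since $P$ is winning exactly when at least half of $\win(v_1),\dots,\win(v_n)$ hold, the decision problem becomes: is
\[
\exists\,\big(\text{the }kd\text{ coordinates of }p_1,\dots,p_k\big)\ \forall\,\big(\text{the }\ell d\text{ coordinates of }q_1,\dots,q_\ell\big):\ \maj\big(\win(v_1),\dots,\win(v_n)\big)
\]
a true sentence? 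This is an $\exists\forall$-sentence whose matrix is a Boolean combination of polynomial constraints, so once we argue the matrix has polynomial size it is exactly the kind of witness that places the problem in $\exists\forall\Reals$ (see \cite{dobbins2018area}).

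To control the size of $\maj$ I would use a polynomial-size Boolean \emph{formula} for the majority function, obtained from a logarithmic-depth sorting network: take the AKS network of Ajtai~\etal~\cite{ajtai1983sorting} (equivalently, one may invoke the shallow-circuit construction of Paterson and Zwick~\cite{paterson1993shallow}), replace every comparator by the pair of gates $(x,y)\mapsto(x\land y,\;x\lor y)$ -- which sorts a pair of Boolean values -- and read off the $\lceil n/2\rceil$-th output wire, which then carries $\maj$ of the inputs. Because the network has depth $O(\log n)$ and fan-in~$2$, unfolding that one wire down to the inputs yields a formula of size $n^{O(1)}$. Substituting it, the matrix of our sentence becomes a Boolean combination of $m=O(n^c k\ell)$ polynomial inequalities of degree $\le p$, so the whole sentence is produced in polynomial time from $\langle V,k,\ell\rangle$; this completes the $\exists\forall\Reals$ part and sets up the algorithmic part. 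For the latter I would run the algorithm of Basu~\etal\ on this two-block sentence with $\le kd$ and $\le\ell d$ variables in the two blocks: it decides truth in time $m^{(kd+1)(\ell d+1)}p^{O(k\ell d^2)}=O\big((n^c k\ell)^{(kd+1)(\ell d+1)}p^{k\ell d^2}\big)$, which is polynomial in $n$ whenever $k$, $\ell$ and $d$ are fixed.

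Finally, the $L_\infty$-norm is handled by the same scheme after one extra rewriting: $\distop_\infty(p_i,v)\le\distop_\infty(q_j,v)$ is equivalent to $\bigvee_{s'\in[d]}\bigwedge_{s\in[d]}|x_s(p_i)-x_s(v)|\le|x_{s'}(q_j)-x_{s'}(v)|$, and squaring each inequality makes it polynomial of degree $2$; this only multiplies the number of polynomials by $d^2$, so both the $\exists\forall\Reals$ membership and the running-time bound (now with $2^{k\ell d^2}$ in place of $p^{k\ell d^2}$) carry over. The main obstacle -- and really the only idea beyond routine bookkeeping on top of the two cited tools -- is the one flagged above: a direct CNF/DNF for majority has exponentially many terms, so the sorting-network (or Paterson--Zwick) construction is essential both to make the reduction to $\exists\forall\Reals$ a polynomial-time reduction and to keep $n$ out of the exponent in the Basu~\etal\ bound.
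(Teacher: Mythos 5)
Your proposal is correct and follows essentially the same route as the paper: encode $\win(v)$ as a Boolean combination of $k\ell$ degree-$p$ polynomial inequalities, use the AKS sorting network (with comparators replaced by $\land/\lor$ pairs) to obtain a polynomial-size formula for the majority function, and then invoke the Basu~\etal\ decision procedure for the resulting two-block sentence, with the same degree-$2$ rewriting trick for $L_\infty$. The only cosmetic difference is that you also note Paterson--Zwick as an alternative to AKS, which the paper mentions only in its introduction.
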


De Berg~\etal~\cite{berg2018faster} introduced the notion of personalized preferences.
More precisely, given a natural number $p$, assuming each axis defines an aspect of the subject voters are voting for, the voter $v_i$ gives different weights to different axes, and $v_i$ has a weighted $L_p$ distance $(\sum_{j\in[d]}w_{ij}(x_j(p)-x_j(v_i))^p)^{1/p}$ from any point $p\in \Reals^d$. For the weighted $L_\infty$ distance, $v_i$ is at distance $\max_{j\in[d]}(w_{ij}|x_j(p)-x_j(v_i)|)$ from any point $p\in\Reals^d$. This approach also works when voters have personalized preferences.

\section{Concluding remarks}
We presented the first polynomial-time algorithm for the one-round discrete
Voronoi game in $\Reals^1$. The algorithm is quite intricate, and it
would be interesting to see if a simpler (and perhaps also faster) algorithm is possible.
Finding a lower bound for the 1-dimensional case is another open problem.

We also showed that the problem is $\Sigma_2^P$-hard in $\Reals^2$.
Fekete and Meijer~\cite{fekete2005one} conjectured that finding an optimal strategy
for the multi-round continuous version of the Voronoi game is PSPACE-complete.
We conjecture that in the multi-round version of the discrete version,
finding an optimal strategy  is PSPACE-hard as well.
Note that using the algebraic method presented in this paper,
it is easy to show that this problem is contained in PSPACE.
While the algebraic method we used is considered a standard technique,
it is, as far as we know, the first time this method is combined with
polynomial-size boolean formulas for the majority function.
We think it should be possible to apply this combination
to other problems as well.



\bibliography{thesis}

\end{document}